\newtheorem{theorem}{Theorem}
\newtheorem{lemma}[theorem]{Lemma}
\newtheorem{proposition}[theorem]{Proposition}
\newcommand{\qed}{\nobreak $\square$}
\newenvironment{proof}[1][Proof]{\begin{trivlist}\item[\hskip \labelsep {\bfseries #1}]}{\qed\end{trivlist}}
\newcommand{\G}{\mathcal{G}}
\newcommand{\E}{\mathcal{E}}
\newcommand{\ZZ}{\mathbb{Z}}
\newcommand{\FF}{\mathbb{F}}
\newcommand{\TT}{\mathbb{T}}
\newcommand{\calP}{\mathcal{P}}
\newcommand{\calS}{\mathcal{S}}
\newcommand{\QQ}{\mathbb{Q}}
\newcommand{\match}{P}
\newcommand{\cee}{(0,0)}
\newcommand{\ceo}{(0,1)}
\newcommand{\coe}{(1,0)}
\newcommand{\coo}{(1,1)}
\newcommand{\mee}{(0,0)}
\newcommand{\moe}{(1,0)}
\newcommand{\moo}{(1,1)}
\newlength{\mylength}
\newenvironment{myquote}{\list{}{\leftmargin=\mylength\rightmargin=0.0in}\item[]}{\endlist}
\def\input{#.pdf_t}1{\input{#1.pdf_t}}
\begin{document}

\title{Types of perfect matchings in toroidal square grids}

\author{%
Marcos Kiwi\thanks{Depto.~Ing.~Matem\'{a}tica \&
  Ctr.~Modelamiento Matem\'atico (CNRS UMI 2807), U.~Chile. 
  Beauchef 851, Santiago, Chile. 
  \texttt{mkiwi@dim.uchile.cl}. 
  Gratefully acknowledges the support of 
  Millennium Nucleus Information and Coordination in Networks ICM/FIC RC130003
  and CONICYT via Basal in Applied Mathematics.}
\and
Martin Loebl\thanks{Dept.~of Applied Mathematics (KAM),
  Charles U.,
  Malostransk\'{e} n\'{a}m. 25, 118~00~Praha~1,
  Czech Republic.
  Gratefully acknowledges the support of the Czech Science Foundation under the contract number P202-13-21988S.
  \texttt{loebl@kam.mff.cuni.cz}
}}

\maketitle              

\begin{abstract}
Let $T_{m,n}$ be toroidal square grid of size $m\times n$ and let both $m$ and
  $n$ be even. Let $P$ be a perfect matching of $T_{m,n}$ and let $D(P)$ be the cycle-rooted spanning forest of $P$ obtained by the generalized Temperley's construction. The types of $P$ and $D(P)$ in
the first homology group $H_1(\TT,\ZZ)$ of torus $\TT$ with coefficients in $\ZZ$ has been extensively studied. In this paper we study the types of $P$ and $D(P)$ in the first homology group $H_1(\TT,\FF_2)$ with the coefficients in $\FF_2$. Our considerations connect two remarkable results concerning perfect matchings 
  of toroidal square grids, namely Temperley's bijection and the 
  Arf-invariant formula.
\end{abstract}

\section{Introduction}\label{sec:intro}
A perfect matching $P$ of a graph $G$ is a collection of edges of $G$ with
  the property that each vertex is adjacent to exactly one edge in $P$. 
This paper connects two remarkable results concerning perfect matchings 
  of toroidal square grids, namely Temperley's bijection and the 
  Arf-invariant formula.
In separate sections we describe next each of the aforementioned
  two results.

\subsection{Generalized Temperley's bijection}\label{sub.temp}
Temperley~\cite{temperley74} showed a seminal correspondence between perfect
  matchings and spanning trees of planar square grids. 
This correspondence was extended by Kenyon, Propp and Wilson~\cite{kpw00} 
  to general planar graphs. The construction also applies to graphs on other surfaces, 
  such as toroidal graphs, which we now sketch.

Let $G= (V,E)$ be a graph embedded in the torus, and let 
  $G^*= (V^*, E^*)$ be its geometric dual, whose vertices are in bijection 
  with the faces of $G$. There is a natural bijection between edges of $G^*$ and edges of $G$: let $f, f'$ be two vertices of $G^*$ and let $e$ be a common boundary edge of faces $f$ and $f'$. 
Then, $G^*$ has edge $e^*= \{f,f'\}$.  
Now, let $\G=\G(G)$ 
  be the bipartite graph whose black vertices $V^B(\G)$ are $V\cup V^*$ 
  and whose white vertices $V^W(\G)$ are in bijection with $E$ (and hence
  with $E^*$). 
Edges in $\G$ are of two types: $\{v,\{v,v'\}\}$
  where $\{v,v'\}\in E$, i.e., $\{v,v'\}\in V^W(\G)$, and  
  $\{f, \{f,f'\}\}$ where $\{f,f'\} \in E^*$
  and thus also  $\{f,f'\}\in V^W(\G)$ by the bijective correspondence of 
  the edges and the dual edges. 
Hence, each white vertex $u$ of $\G$ has degree 4: two edges incident with $u$ come from interpreting $u$ as an edge of $G$, and the remaining two edges incident with $u$ come from interpreting $u$  as a dual edge of $G$.  Graphs $\G$ obtained in this way are called 
  \emph{Temperleyan} (see~\cite{kpw00}).
This paper studies toroidal square grids 
  of size $m\times n$, which for the case where $m$ and $n$
  are even positive integers can be easily seen to be Temperleyan.

A \emph{cycle-rooted spanning forest (CRSF)} of $G$ is a spanning subgraph 
  of an orientation of $G$ where each vertex has out-degree one. 
It follows that each component of a CRSF consists of exactly one oriented
  cycle, called \emph{root cycle}, and oriented trees (arborescence) 
  attached to it and oriented towards it. 
We say that a pair $(F, F^*)$ is 
  \emph{dual} if $F$ is a CRSF that spans $G$, $F^*$ is a CRSF 
  that spans $G^*$ and each $w\in V^W(\G)$ is crossed either by an 
  edge of $F$ or by an edge of $F^*$ (but not by both).   

There is a one-to-one correspondence, called \emph{generalized Temperley's 
  bijection}, between pairs $(F,F^*)$ of dual CRSF's and perfect 
  matchings of $\G$ (see~\cite{kpw00,DG15} for the construction). 
Given a perfect matching $P$ of $\G$, the corresponding pair $(F,F^*)$ 
  is constructed as follows: we first orient each edge $e\in P$ from 
  its black vertex to its white vertex, say $w$. 
Vertex $w$ of $\G$ represents edge $e(w)$ of $G$ (or of $G^*$), and 
  edge $e$ 
  can be viewed as a 'half-edge' of $e(w)$. Let $e'$ be the other 
  'half-edge' of $e(w)$. The white vertex of $e'$ is $w$ and we orient $e'$ 
  from $w$ to its black vertex.
This defines a CRSF of $\G$ which we denote by $D(P)$
  (see Figure~\ref{fig:matchingToDigraph} for an illustration). 
We observe that
  since $\match$ is a perfect matching of $\G$, each white vertex has 
  in- and out-degree $1$ in $D(P)$, and if $u$ and $v$ are the other 
  end-vertices of the edges of $D(P)$  incident with $w$ then both $u$ 
  and $v$ are either vertices of $G$ or both are vertices of $G^*$. 
Hence, $D(P)$ uniquely decomposes into a a pair $(F,F^*)$ of dual
  CRSF's.

\begin{figure}[h]
\centering
\subfigure[Matching {$\match$} represented as rectangles.]{\label{fig:matchingToDigrapha}\input{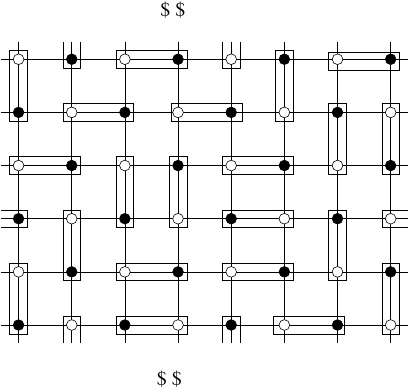_t}}
\qquad
\subfigure[Digraph {$D(\match)$}.]{\label{fig:matchingToDigraphb}%
\begin{picture}(0,0)%
\includegraphics{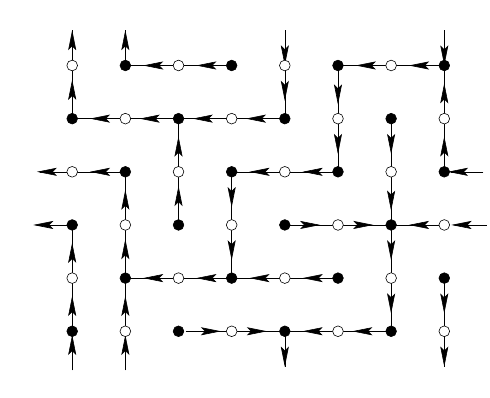}%
\end{picture}%
\setlength{\unitlength}{2486sp}%
\begingroup\makeatletter\ifx\SetFigFont\undefined%
\gdef\SetFigFont#1#2#3#4#5{%
  \reset@font\fontsize{#1}{#2pt}%
  \fontfamily{#3}\fontseries{#4}\fontshape{#5}%
  \selectfont}%
\fi\endgroup%
\begin{picture}(6192,5047)(6061,-4958)
\put(7516,-4876){\makebox(0,0)[lb]{\smash{{\SetFigFont{10}{12.0}{\familydefault}{\mddefault}{\updefault}{\color[rgb]{0,0,0}$$}%
}}}}
\put(6076,-2176){\makebox(0,0)[lb]{\smash{{\SetFigFont{10}{12.0}{\familydefault}{\mddefault}{\updefault}{\color[rgb]{0,0,0}$C_{\match}$}%
}}}}
\put(12281,-2176){\makebox(0,0)[lb]{\smash{{\SetFigFont{10}{12.0}{\familydefault}{\mddefault}{\updefault}{\color[rgb]{0,0,0}$C_{\match}$}%
}}}}
\put(7516,-106){\makebox(0,0)[lb]{\smash{{\SetFigFont{10}{12.0}{\familydefault}{\mddefault}{\updefault}{\color[rgb]{0,0,0}$$}%
}}}}
\end{picture}%
}
\caption{Perfect matching $\match$ of a $6\times 8$ grid and associated $D(\match)$.%
}\label{fig:matchingToDigraph}
\end{figure}

It is mentioned in~\cite{DG15} that, if $G$ is a toroidal
graph, then each root cycle of each of its dual CRSF's $(F, F^*)$ is
non-contractible.  We state this property for toroidal square grids formally now and postpone its proof to the next section.

\begin{lemma}\label{o.noncon}
Let $\G$ be a toroidal square grid  of size $m\times n$, 
  where both $m$ and $n$ are even. 
If $P$ is a perfect matching of $\G$ and $C$ is a root cycle of $D(P)$,
   then $C$ is non-contractible. 
\end{lemma}

It is shown in~\cite{DG15}, see also~\cite{S16}, that for toroidal graphs the generalized
  Temperley's correspondence extends to homology considerations.  
Using the construction of Thurston~\cite{Th90}, one can associate each perfect
  matching $P$ of $G$ with its \emph{height function} and with an
  element $[P]$ of the first homology group $H_1(\TT,\ZZ)$ of torus $\TT$ 
  with 
  coefficients in $\ZZ$. 
Each pair $(F,F^*)$ of dual CRSF's also naturally defines an 
  element $[F,F^*]$ of $H_1(\TT,\ZZ)$ by considering the set of the 
  root cycles.  
It is explained in~\cite{DG15} how $[P]$ and $[F,F^*]$ are related for
  a perfect matching $P$ and the corresponding dual CRSF's $(F,F^*)$.

\subsection{Arf-invariant formula}\label{sub.arf}
In statistical physics a Perfect matching is called 
  a \emph{dimer arrangement}.
Very often graph $G$ comes with an edge-weight
  function $w: E\rightarrow \QQ$. 
The statistics of the weighted perfect
  matchings is then given by the generating function
\[
P(G,x)= \sum_{\text{$\match$ perfect matching of $G$}}x^{\sum_{e\in \match}w(e)},
\] 
also known as the \emph{dimer partition function} on $G$. 
We note that if $G$ is bipartite, with bipartition 
  classes $V_{1}$ and $V_{2}$, then $P(G,x)$ equals the
  permanent of $A(G)$, where $A= A(G)$ is the $|V_1|\times |V_2|$ matrix
  given by $A_{uv}= x^{w(uv)}$.

If $g$ denotes the minimum genus of an orientable surface $\E$ in which a 
  graph $G$ embeds, then
  the partition function of a dimer model on a graph $G$ can be written
  as a linear combination of $2^{2g}$ Pfaffians of \emph{Kasteleyn matrices}.
These $2^{2g}$ Kasteleyn matrices are skew-symmetric $|V|\times |V|$
  matrices determined by $2^{2g}$ orientations of $G$, called Kasteleyn (or
  Pfaffian) orientations. 
Kasteleyn himself proved this Pfaffian
  formula in the planar case and for toroidal square grids where all
  the horizontal edge-weights and all the vertical edge-weights are the
  same~\cite{K1}, and stated the general fact~\cite{K2}. 
A complete combinatorial proof of this statement was first obtained 
  much later by Gallucio and Loebl~\cite{GL}, and independently by 
  Tesler~\cite{T}. 
Cimasoni and Reshetikhin~\cite{CR07} provided a beautiful understanding of 
  the formula in terms of the quadratic forms on $H_1(\E,\FF_2)$ and 
  since then the formula is known as the \emph{Arf-invariant formula}. 

\subsection{Main results}\label{sub.main}
The Arf-invariant formula requires to associate, 
  to each perfect
  matching $P$ of a
  graph $G$ embedded in orientable surface $\E$, an
  element $[P]_2$ of the first homology group $H_1(\E,\FF_2)$ with 
  coefficients in $\FF_2$; see e.g.~Loebl and Masbaum~\cite{LM11} for details. 
This paper studies toroidal square grids.
If $T$ is a toroidal
  square grid then $[P]_2$ can be represented by a pair $(p_1, p_2)$
  defined as follows: let $A$ be one layer of vertical edges of
  $T$ and let $B$ be one layer of horizontal edges of $T$. 
Then $p_1= |P\cap A|\mod 2$ and $p_2= |P\cap B|\mod 2$. We also say
  that $(p_1,p_2)$ is the \emph{type} of $P$. 
Let $D(P)$ be the CRSF of $P$ defined above. Again, $D(P)$ naturally
  defines an element $[D(P)]_2$ of $H_1(\E,\FF_2)$ by considering the
  root cycles: Let $C$ be a root cycle of $D(P)$. Then we represent
  $[D(P)]_2$ by a pair $(d_1,d_2)$ where $d_1= |C\cap A|\mod 2$ and
  $d_2= |C\cap B|\mod 2$.  We also say that $(d_1,d_2)$ is the \emph{type}
  of $D(P)$.
By Lemma~\ref{o.noncon} each root cycle of $D(P)$ is
  non-contractible. By construction the root cycles of $D(P)$ are disjoint and since $T$ is a toroidal graph, the type of $D(P)$ is independent of the choice of root cycle $C$. 
We next observe that the type of $D(P)$ is never $(0,0)$. For the sake of contradiction, assume $C$ is a 
 cycle in $D(P)$ of type $(0,0)$. It follows that~$C$ must be the symmetric difference of boundaries of faces 
  $F_{1},\ldots, F_{k}$ of $T$. Since $C$ is connected, it follows that $C$ is contractible which contradicts 
  Lemma~\ref{o.noncon}. We thus have: 

\begin{lemma}\label{lem:noEvenEven}
For every perfect matching $\match$ of $T$, $[D(P)]_2\neq (0,0)$.
\end{lemma}

We can now state the main result of this paper.

\begin{theorem}\label{thm.mm1}
Let $m$ and $n$ be even positive integers and let $T_{m,n}$ 
  denote the $m\times n$ toroidal square grid. 
If $P$ is a perfect matching of $T_{m,n}$ such that 
  $[P]_2\neq (0,0)$, then $[P]_2= [D(P)]_2$.
\end{theorem}

We recall that $T_{m,n}$ with both $m$ and $n$ even positive
  integers is a Temperleyan graph. 
Our proof of Theorem~\ref{thm.mm1} 
  abstracts the properties of toroidal square grids
  that are needed for the argument to go through.  
It is natural
  to ask if Theorem~\ref{thm.mm1} can be generalized to all toroidal
  Temperleyan graphs.

\section{Preliminaries}\label{sec:prelim}
Throughout this work, let $m$ and $n$ be two positive integers.
We solely consider the case where both $m$ and $n$ are even and 
  from now on, in 
  order to avoid unnecessary repetitions, we omit this condition from the 
  statement of all results.
Say $m=2m'$ and $n=2n'$.
Let $T_{m',n'}$ be the toroidal square grid with
  vertex set $\{0,\ldots,m'-1\}\times\{0,\ldots,n'-1\}$
  and edges with endpoints $(i,j)$ and
  $(i',j')$ of any of their edges differ in at most one coordinate
  and exactly by $1$,\footnote{Here and throughout this work,
  unless said otherwise,
  arithmetic involving vertex labels is always
  modulo $m$ and $n$ over the first 
  and second coordinates, respectively} i.e.,
\[
(i-i'=\pm 1 \wedge j=j')\vee (i=i'\wedge j-j'=\pm 1).
\]
Note that for the standard embedding of $T_{m',n'}$ in the 
  torus, we get that  $T^*_{m',n'}$ is isomorphic to $T_{m',n'}$.
Thus, if $\G=\G(T_{m',n'})$ is the Temperleyan graph associated to 
  $T_{m',n'}$ as described in Section~\ref{sub.temp}.
  then $\G$ is itself isomorphic to $T_{m,n}$.
Without loss of generality we can think of the $(i,j)$ nodes of $T_{m,n}$ 
  where $i$ and $j$ are of distinct parity as those in one-to-one
  correspondence with the black vertices of $\G$ (equivalently, 
  with $V(T_{m',n'})\cup V(T^*_{m',n'})$), and thus henceforth refer them
  as black vertex.
Analogously, we identify the $(i,j)$ nodes of $T_{m,n}$ of equal parity 
  with the white vertices of $\G$ (equivalently, 
  with $E(T_{m',n'})$ or $E(T^*_{m',n'})$), and also henceforth refer to them
  as white vertex.
From now on, we work directly with in $T_{m,n}$ in the understanding that 
  via the isomorphism between $\G$ and $T_{m,n}$ our discussion indeed concerns
  $\G$.

The collection of nodes $(i,j)\in V(T_{m,n})$ 
  for which $0\leq j<n$ (respectively,
  $0\leq i<m$) will be called the \emph{$i$-th row} 
  (respectively, \emph{$j$-th column}), and $i$ 
  (respectively, $j$) will be called its index.
Edges whose endvertices belong to the same row (respectively, column)
  will be called \emph{horizontal} (respectively, \emph{vertical}) edges.
We let $A$ be the complete layer of 
  horizontal edges between vertices
$(0, j)$ and $(m-1, j)$, $j= 0, \ldots, n-1$. 
Similarly, we let $B$ be the complete layer of vertical edges between vertices $(j,0)$ and $(j, n-1)$, $j= 0, \ldots, m-1$.

Paths and cycles, say $S$,  will always be considered as subgraphs
  of whatever the graph or digraph we are dealing with.
We thus rely on the standard notation $V(S)$ and $E(S)$ to 
  denote $S$'s vertices and edges, respectively.
For a simple path or cycle $S$ in whatever graph we consider, 
  we say that vertex $v\in V(S)$ is a \emph{corner} of $S$ if 
  one of the edges of $S$ incident to $v$ is a 
  horizontal edge and the other one is a vertical edge.  

We now establish a simple, yet key property, 
  of certain type of cycles of $T_{m,n}$ (equivalently, of $\G$). 
The property concerns the number of vertices encircled by such cycles.
\begin{lemma}\label{lem:contractCyclesConnected}
If $C$ is a contractible simple cycle of $T_{m,n}$ all of whose 
  corners are of the same color,
  then the disk encircled by $C$ contains a single connected component 
  of $T_{m,n}\setminus C$.
\end{lemma}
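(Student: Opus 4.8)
The plan is to argue by contradiction on the structure of a planar region cut out by a contractible cycle $C$ whose corners all have the same parity. Recall that a contractible cycle of $T_{m,n}$ bounds a disk $D$ in the torus; let $R$ denote the set of nodes of $T_{m,n}$ strictly inside $D$ (encircled by $C$), so $R = V(T_{m,n}) \cap \mathrm{int}(D) \setminus V(C)$. We want to show the induced subgraph $T_{m,n}[R]$ is connected. Suppose it is not, and pick two distinct connected components; then one can choose a node $u \in R$ together with a maximal horizontal or vertical ``corridor'' inside $D$ through $u$ whose two endpoints lie on $C$. Concretely, starting from an interior node, extend a straight segment in a fixed direction until it first meets $C$ on each side; since $C$ is contractible and $T_{m,n}$ is a grid, both ends land on $C$, and the segment together with an arc of $C$ separates $D$ into two subregions, each containing encircled nodes if the complement were disconnected.

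The key step is to use the parity constraint on the corners of $C$ to derive a contradiction from such a separating segment. First I would set up coordinates so that the corners of $C$ all have, say, even parity in the sense of the ``even node'' definition (both coordinates even); the two cases of corner-parity are symmetric up to translating the grid by $(1,0)$ or $(1,1)$, which exchanges roles but preserves the claim, so it suffices to treat one case. Then I would track how $C$ turns: between consecutive corners $C$ runs straight, and because every corner sits at an even node, the straight runs of $C$ alternate in a controlled way — a horizontal run of $C$ occupies a single row, and its two endpoints (corners) are even column nodes, forcing the run to have even length, and similarly for vertical runs. The crucial consequence is a parity/coloring restriction: walking around $C$, the displacement accumulated between successive corners is always even in the relevant coordinate, which pins down the residues mod $2$ of the rows and columns that $C$ can occupy, and hence the rows/columns that the interior $R$ can meet.

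With that in hand, I would show that a separating straight segment of the kind described above must have both endpoints on $C$ at corners (or on straight runs) whose positions are incompatible with the parity pattern just established — i.e. such a clean separator cannot exist, because it would have to meet $C$ at two points whose coordinate parities violate the constraint that all corners are even nodes. Since the existence of that separator was forced by the assumption that $T_{m,n}[R]$ is disconnected, we get the contradiction, and connectedness follows. The main obstacle I anticipate is the bookkeeping in the second paragraph: making the ``all corners even $\Rightarrow$ straight runs have even length and lie in prescribed residue classes'' statement precise and airtight for a contractible (hence non-self-intersecting, null-homotopic) cycle, and then verifying that this really does rule out every possible separating corridor rather than just the obvious ones. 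A clean way to organize this may be to reduce, via the parity restriction, to a sublattice of $T_{m,n}$ (keeping only rows and columns of one fixed parity) on which $C$ becomes an ordinary grid cycle with no such corner condition needed, and then invoke the evident fact that a simple grid cycle encircles a connected set of lattice points; I would develop the argument along those lines if the direct approach gets unwieldy.
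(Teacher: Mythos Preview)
Your outline heads in a workable direction but is considerably more elaborate than necessary, and the central ``separating corridor'' step has a real gap. You assert that if the interior is disconnected one can run a straight horizontal or vertical segment through some interior node $u$, hit $C$ on both sides, and thereby separate two components; but nothing guarantees that a \emph{single} straight chord separates two given components of a general grid region, and even when such a chord exists its two landing points on $C$ are typically interior points of straight runs of $C$, not corners. Your intended contradiction (``the endpoints would have coordinate parities violating the corner condition'') therefore does not fire, since the corner-parity hypothesis says nothing about non-corner points of $C$. The fallback sublattice idea you mention at the end could be made to work, but you would still need to explain how a cycle whose corners are all even nodes descends to an honest simple cycle on the coarser lattice and why connectivity of the interior transfers back.

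The paper avoids all of this with a short local argument. If the encircled region were disconnected, then $C$ would have to meet the boundary of some unit face of $T_{m,n}$ in a disconnected set; under the corner-parity hypothesis this forces $C$ to contain the two parallel horizontal edges of that face (rows $i$ and $i+1$) or the two parallel vertical edges (columns $j$ and $j+1$). But then some corner of $C$ lies in row $i$ and another in row $i+1$ (respectively columns $j$ and $j+1$), contradicting that all corners have the same parity. So the whole lemma reduces to a single face-level observation rather than a global separation argument; I would recommend you replace the corridor construction with this local pinch analysis.
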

\begin{proof}
For the sake of contradiction, assume the disk encircled by
  $C$ contains two connected components of $T_{m,n}\setminus C$.
The only way in which this could happen is if 
  there is a face of $T_{m,n}$ whose boundary contains
  two edges of $C$, say $e$ and $e'$, which do not share endvertices.
Thus, $e$ and $e'$ must be either both horizontal or 
  both vertical edges.
Assume the former case happens (the latter case is handled similarly).
Consider one of the two components (paths), say $Q$, 
  obtained by removing edges $e$ and $e'$ from $C$.
Since the endvertices of $Q$ belong to two consecutive rows, then
  $Q$ must contain an odd number of column edges, but then,
  due to $T_{m,n}$'s grid like structure,  the path $Q$ can not
  have all of its corners of the same color.
\end{proof}

\begin{lemma}\label{lem:contractCyclesInterior}
If $C$ is a contractible simple cycle of $T_{m,n}$ all of whose 
  corners are of the same color,
  then the disk encircled by $C$ contains an odd number of vertices of 
  $T_{m,n}\setminus C$.
\end{lemma}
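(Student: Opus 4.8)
The plan is to confine $C$ to a coarsened ``doubled'' grid using the parity hypothesis, and then to read the parity of the interior count off Pick's theorem.

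First I would unwind the hypothesis: there are fixed parities $\epsilon_1,\epsilon_2\in\{0,1\}$ such that every corner $(i,j)$ of $C$ has $i\equiv\epsilon_1$ and $j\equiv\epsilon_2\pmod 2$ (this is exactly how the hypothesis is used in the proof of Lemma~\ref{lem:contractCyclesConnected}). Being contractible, $C$ has at least one corner, so it decomposes into maximal straight segments, alternately horizontal and vertical, each joining two consecutive corners. A horizontal segment lies in a single row, whose index must be $\equiv\epsilon_1$ since the segment's endpoints are corners; likewise every vertical segment lies in a column of index $\equiv\epsilon_2$. Hence every edge of $C$ lies on the subgraph $G'$ of $T_{m,n}$ consisting of all rows of index $\equiv\epsilon_1$ together with all columns of index $\equiv\epsilon_2$. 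After rescaling by a factor of two, $G'$ is an ordinary square grid; I will call its faces \emph{blocks}, each being a union of four faces of $T_{m,n}$. Since $C$ is contractible it bounds a disk $D$, and we may regard $C$ as a simple rectilinear lattice polygon in the plane, all of whose sides run along $G'$.

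Next I would count. Because $C\subseteq G'$, the disk $D$ meets the interior of no block, so $D$ is the union of some number $\beta\geq 1$ of blocks and the area it encloses is $4\beta$. Because the endpoints of every segment of $C$ are corners, each segment has even length, so $|E(C)|$ is even and $C$ runs along exactly $p\eqdef|E(C)|/2$ edges of $G'$; in particular $p$ is the perimeter of the polyomino $D$ measured in $G'$. Applying Pick's theorem to the simple lattice polygon $C$ with respect to the node set of $T_{m,n}$ — and using that the number of nodes lying on $C$ equals $|E(C)|$, as $C$ is a cycle — I obtain, with $I$ denoting the number of nodes of $T_{m,n}\setminus C$ that lie in $D$, the identity $4\beta = I + \tfrac12|E(C)| - 1$, that is $I = 4\beta - p + 1$.

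It then remains only to observe that $p$, being the perimeter of a closed rectilinear lattice walk, is even: the signed horizontal displacements of the horizontal segments of $C$ sum to zero, so the total horizontal segment length is even, and likewise the total vertical segment length, whence $p$ is even. Therefore $I = 4\beta - p + 1 \equiv 1 \pmod 2$, which is precisely the claim. The one delicate point is the first step — seeing that the parity condition pins $C$ to the doubled grid $G'$ and that $C$ may then be treated as a planar lattice polygon; everything after that is routine, and one can avoid invoking Pick's theorem by deriving $I = 4\beta - p + 1$ from a direct Euler-characteristic computation for the planar graph formed by $C$ together with the grid lines it encloses. Note that this argument does not use Lemma~\ref{lem:contractCyclesConnected}, though connectedness of the interior could alternatively power an induction on $\beta$.
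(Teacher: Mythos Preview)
Your argument is correct and takes a genuinely different route from the paper's. The paper proceeds by induction: it invokes Lemma~\ref{lem:contractCyclesConnected}, locates a shortest corner-to-corner segment $P$ of $C$, attaches to $P$ a $2$-wide rectangle $R_1$ lying inside the disk, and reduces to a smaller cycle via one of the three configurations in Figure~\ref{fig:cases}; the base case is a rectangle whose sides have odd length, which encloses a product of two odd numbers of nodes. Your proof instead makes a single global observation --- that the parity hypothesis forces $C$ onto the coarsened grid $G'$ --- and then reads off the parity of $I$ in one shot from Pick's theorem (or an Euler-characteristic count) together with the evenness of the $G'$-perimeter.

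What each buys: your approach is shorter, avoids the case analysis and the appeal to Lemma~\ref{lem:contractCyclesConnected}, and in fact yields the exact formula $I=4\beta-p+1$ rather than just the parity. The paper's induction is more self-contained in that it does not pass to the universal cover or cite Pick, and its rectangle-peeling step foreshadows the ear-removal manoeuvres used later in Section~\ref{sec:toroidal}. One small remark: your justification that $p$ is even via signed displacements is fine, but it is quicker to note that $C$ is a cycle in the bipartite graph $G'$ and hence has even $G'$-length.
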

\begin{proof}
By Lemma~\ref{lem:contractCyclesConnected} we know that $C$ encircles
  one connected component of $T_{m,n}\setminus C$.
Let~$Q$~be the shortest segment of $C$ 
  (ties broken arbitrarily) whose endvertices are corners of $C$.
By minimality of $Q$, 
  all vertices of $Q$ must belong to either the same column or the same 
  row. 
Without loss of generality, assume that the former case holds and
  that $Q$'s vertices are $(i_1,j), (i_1+1,j), \ldots, (i_2,j)$.
Let $R_1$ (respectively, $R_2$) be the rectangle (cycle with four 
  corners) of $T_{m,n}$ that contains $Q$ as a segment and  
  has as opposite corners vertices $(i_1,j)$ and $(i_2,j+2)$ (respectively,
  $(i_2,j-2)$).
Because of the hypothesis concerning $C$'s corners
  and the grid structure of $T_{m,n}$, either $R_1$ or $R_2$ must encircle
  an area of the torus also encircled by $C$.
Without loss of generality, assume $R_1$ is such a rectangle and 
  denote by $C_1$ the cycle that encircles it.
Either $C=C_{1}$ or by minimality of $Q$ 
  the situation is one of the three depicted in Figure~\ref{fig:cases}.
By hypothesis, there is an odd number of vertices between any two 
  consecutive corners of $C_1$, thence $C_1$ encircles an odd number of 
  vertices.
The desired conclusion follows by induction 
  on the one or two simple cycles obtained
  by taking the symmetric difference between $C$ and $C_1$.
\end{proof}
\begin{figure}[h]
\centering 
\begin{picture}(0,0)%
\includegraphics{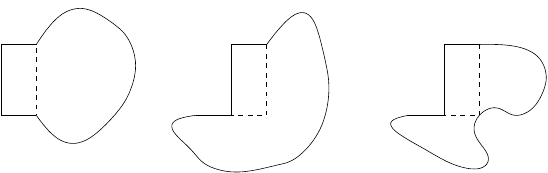}%
\end{picture}%
\setlength{\unitlength}{2486sp}%
\begingroup\makeatletter\ifx\SetFigFont\undefined%
\gdef\SetFigFont#1#2#3#4#5{%
  \reset@font\fontsize{#1}{#2pt}%
  \fontfamily{#3}\fontseries{#4}\fontshape{#5}%
  \selectfont}%
\fi\endgroup%
\begin{picture}(6948,2184)(2464,-2593)
\put(8146,-1456){\makebox(0,0)[lb]{\smash{{\SetFigFont{7}{8.4}{\rmdefault}{\mddefault}{\updefault}{\color[rgb]{0,0,0}$R_1$}%
}}}}
\put(2521,-1456){\makebox(0,0)[lb]{\smash{{\SetFigFont{7}{8.4}{\familydefault}{\mddefault}{\updefault}{\color[rgb]{0,0,0}$R_1$}%
}}}}
\put(3826,-556){\makebox(0,0)[lb]{\smash{{\SetFigFont{7}{8.4}{\familydefault}{\mddefault}{\updefault}{\color[rgb]{0,0,0}$C$}%
}}}}
\put(5446,-1456){\makebox(0,0)[lb]{\smash{{\SetFigFont{7}{8.4}{\familydefault}{\mddefault}{\updefault}{\color[rgb]{0,0,0}$R_1$}%
}}}}
\put(6481,-601){\makebox(0,0)[lb]{\smash{{\SetFigFont{7}{8.4}{\familydefault}{\mddefault}{\updefault}{\color[rgb]{0,0,0}$C$}%
}}}}
\put(9271,-1006){\makebox(0,0)[lb]{\smash{{\SetFigFont{7}{8.4}{\familydefault}{\mddefault}{\updefault}{\color[rgb]{0,0,0}$C$}%
}}}}
\end{picture}%
\caption{Possible relations between $R_1$ and $C$ when $C\neq R_{1}$.}\label{fig:cases}
\end{figure}

Consider now a perfect matching $\match$ of $T_{m,n}$
  (in contrast to our view of paths and cycles as subgraphs, we
  consider matchings as subsets of edges). We now formally define the di-graph denoted by $D(\match)$ in Section~\ref{sec:intro}. 
Since $\match$ is a perfect matching of $T_{m,n}$, we denote it by $D_{m,n}(\match)$ or simply by $D_{m,n}$.
Di-graph $D_{m,n}=D_{m,n}(\match)$ has vertex set $U$ and edge set $D$, hence $D_{m,n}=(U,D)$.
The vertex set of $D_{m,n}$ equals the vertex set of $T_{m,n}$, that is 
  $U=V(T_{m,n})$.
The edge set $D$ is such that (see Figure~\ref{fig:matchingToDigraph}):
\begin{itemize}
\item If $v$ is black, then $vw\in D$ if and only if $vw$ is an edge of $\match$.
\item If $v$ is white, then 
  $vw\in D$ provided: 
  \begin{inparaenum}[(1)]
  \item $vw$ is not an edge of $\match$, and 
  \item if $uv$ is the edge of the matching $\match$ incident to $v$, then 
    $u$, $v$, and $w$ all lie in the same row or the same column of $T_{m,n}$.
  \end{inparaenum}
\end{itemize}

We are now ready to prove  Lemma~\ref{o.noncon}.

\begin{proof}{[of Lemma~\ref{o.noncon}]}
If $C$ is a dicycle of $D_{m,n}(\match)$, 
  then $\match\setminus C$ is a perfect matching of $T_{m,n}\setminus C$. 
By definition of $D_{m,n}(\match)$ only black nodes can
  be corners, so all corners of $C$ are of the same color.
By 
  Lemma~\ref{lem:contractCyclesInterior},
  it follows that $C$ cannot be contractible since otherwise
  the set of nodes $S\subseteq V(T_{m,n})$ encircled by $C$ would be 
  of an odd cardinality set covered by $P\setminus C$.
\end{proof}


\section{Consistent Toroidal Systems}\label{sec:toroidal}
This section is dedicated to proving Theorem~\ref{thm.mm1}.
If $\match$ is a perfect matching of $T_{m,n}$ then we denote by $C_{\match}$ an arbitrary cycle of $D_{m,n}$.
Before precisely describing the abstract setting we will work in,
  we establish the following result, which essentially says that 
  we can restrict our attention to certain ``better behaved''
  perfect matchings. 
Say that a perfect matching $\match$ of $T_{m,n}$ is \emph{well behaved}, if 
  $C_{\match}$ neither contains a horizontal edge (respectively, vertical edge)
  both of whose endvertices are in row $0$ nor both in row $m-1$
  (respectively, both of whose endvertices are in column $0$ nor
  both in column $n-1$).

\begin{lemma}\label{lem:betterBehaved}
If $\match$ is a perfect matching of $T_{m,n}$,
  then there is a well behaved perfect matching $\match'$ of $T_{m+4,n+4}$
  for which the types of $\match$ and $\match'$ are the same,
  and the types of $C_{\match}$ and $C_{\match'}$ are the same.
\end{lemma}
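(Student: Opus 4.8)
The plan is to construct $M'$ explicitly, surrounding $T_{m,n}$ with a ``collar'' of trivially matched cells and rerouting the two distinguished layers through it. Let $\iota\colon V_{m,n}\to V_{m+4,n+4}$ be the injection $(i,j)\mapsto(i+2,j+2)$. It inserts four consecutive new rows between old rows $m-1$ and $0$---new rows $m+2,m+3,0,1$, in that cyclic order, with the layer $A'$ of $T_{m+4,n+4}$ (the vertical edges joining new rows $m+3$ and $0$) sitting in their middle, flanked by new rows on both sides---and symmetrically four new columns between old columns $n-1$ and $0$, together with a $4\times4$ block of new nodes at the corner. Define $M'$ as follows: every edge of $M$ lying in neither $A$ nor $B$ is carried over by $\iota$; every edge of $M\cap A$, which occupies some old column $j$, is replaced by the three parallel vertical edges of new column $j+2$ joining new rows $m+1$ and $m+2$, new rows $m+3$ and $0$, and new rows $1$ and $2$ (a length-three ``staircase'' across the collar), and symmetrically each edge of $M\cap B$ is replaced by three parallel horizontal edges across the collar columns; finally every remaining new node is covered by the obvious brick pattern of dominoes---vertical dominoes on new rows $m+2,m+3$ and on new rows $0,1$ inside each old column carrying no $A$-edge, the analogous horizontal-domino pattern inside each old row carrying no $B$-edge, and horizontal dominoes inside the corner block pairing new columns $n+2$ with $n+3$ and new columns $0$ with $1$. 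A routine check shows these pieces are pairwise disjoint and together cover $V_{m+4,n+4}$, so $M'\in\calM_{m+4,n+4}$.

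That $M'$ has the same type as $M$ is then immediate: the only edges of $M'$ lying in $A'$ are the middle edges of the staircases that replaced the edges of $M\cap A$, one for each, and likewise for $B'$---the brick pattern and the corner block were chosen precisely so as to contribute no edge of $A'$ or $B'$---whence $|M'\cap A'|=|M\cap A|$ and $|M'\cap B'|=|M\cap B|$.

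The heart of the argument is comparing $D_{m+4,n+4}(M')$ with $D_{m,n}(M)$. Since $\iota$ shifts both coordinates by $2$, it preserves the black/white colouring, and a case analysis following the construction shows that there is a ``lift'' operation turning each dipath of $D_{m,n}(M)$ into a dipath of $D_{m+4,n+4}(M')$, obtained by carrying every step over by $\iota$ except that each step whose underlying edge lies in $A$ is replaced by a fixed five-step detour---the five-edge vertical path of the relevant old column of $T_{m+4,n+4}$ running from new row $m+1$ through the collar to new row $2$, exactly one of whose edges (the one joining new rows $m+3$ and $0$) lies in $A'$---and symmetrically for $B$. Conversely, tracing the brick pattern shows that no dipath of $D_{m+4,n+4}(M')$ can remain inside the collar, so every dicycle of $D_{m+4,n+4}(M')$ is the lift of a unique dicycle of $D_{m,n}(M)$. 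Hence lifting is a bijection between the two sets of dicycles that preserves the parities of the numbers of traversed $A$- and $B$-edges, and therefore preserves the type of a dicycle. Choosing the fixed total orderings of dicycles coherently with $\iota$---ordering a dicycle of the larger digraph primarily according to where the $D_{m,n}$-ordering places its underlying ``unpadded'' node set, whenever that node set is defined---forces $C_{M'}$ to be the lift of $C_M$, so $C_M$ and $C_{M'}$ have the same type. Finally, the lift of $C_M$ consists of $\iota$-images of the edges of $C_M$ (horizontal edges inside new rows $2,\dots,m+1$ and vertical edges inside new columns $2,\dots,n+1$) together with detour edges (vertical edges inside old columns and horizontal edges inside old rows); none of these is a horizontal edge inside new row $0$ or new row $m+3$, nor a vertical edge inside new column $0$ or new column $n+3$, so by definition $M'$ is well behaved.

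The main obstacle is exactly this comparison: carrying out the case analysis that establishes the detour behaviour near all four corners of the collar, that no dicycle is trapped inside the collar, that the induced map on dicycles is a type-preserving bijection, and that the fixed total orderings can be chosen orientation-independent and $\iota$-coherent simultaneously. Everything else---that $M'$ is a perfect matching, and that it shares the type of $M$---is routine bookkeeping.
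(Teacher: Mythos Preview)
Your construction is essentially the paper's: shift $T_{m,n}$ into $T_{m+4,n+4}$ via $\iota$, replace each $A$- (resp.~$B$-) edge by the three-edge vertical (resp.~horizontal) run across the collar, and fill the remaining collar nodes with a brick pattern. The only cosmetic difference is that you tile the $4\times 4$ corner block with horizontal dominoes while the paper uses vertical ones; either works and neither contributes to $A'\cup B'$, so $|M'\cap A'|=|M\cap A|$ and $|M'\cap B'|=|M\cap B|$ in both versions.

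Where you go further than the paper is in justifying why $C_{M'}$ corresponds to $C_M$. The paper simply asserts that ``$C_M$ can be naturally identified with $C_{M'}$''; you actually spell out the lift operation, argue that no dicycle of $D_{m+4,n+4}(M')$ is trapped in the collar, deduce a type-preserving bijection between dicycles, and then force $C_{M'}$ to be the lift of $C_M$ by choosing the total ordering on $T_{m+4,n+4}$-dicycles compatibly with the $T_{m,n}$-ordering. That last step is legitimate---the paper never pins down the orderings, so you are free to fix them coherently across grid sizes---but be aware that the ``unpadded node set'' you key on has to be defined independently of $M$ for the ordering to make sense globally; the cleanest formulation is to order a dicycle of the larger grid by the $\iota^{-1}$-image of its intersection with $\iota(V_{m,n})$. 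An alternative that sidesteps the ordering issue entirely: by Lemma~\ref{lem:noEvenEven} and Lemma~\ref{lem:cycles} the dicycles of $D_{m,n}(M)$ are pairwise disjoint and non-contractible, so on the torus they all have the same homology class mod~$2$ and hence the same type; thus \emph{any} dicycle of $D_{m+4,n+4}(M')$ has the same type as $C_M$, and no coherence of orderings is needed.
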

\begin{proof}
We explicitly build $\match'$ with the desired properties (see Figure~\ref{fig:wellBehaved} for an illustration of the construction).
First, 
  for each edge in $\match\setminus (A\cup B)$, say between vertices
  $(i,j)$ and $(i',j')$ add an edge into $\match'$ between
  $(i+2,j+2)$ and $(i'+2,j'+2)$.

\begin{figure}[h]
\centering
\input{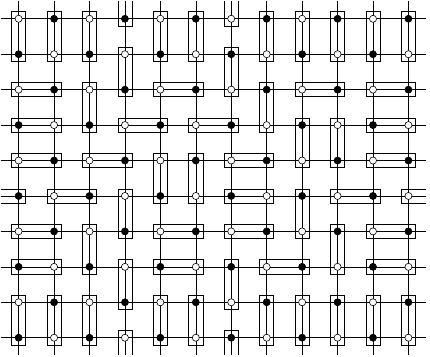_t}
\caption{Well behaved matching $P'$ in the $10\times 12$ toroidal grid derived 
  from the matching $P$ of Figure~\ref{fig:matchingToDigrapha}.}\label{fig:wellBehaved}
\end{figure}

Now, if there is a vertical edge in $\match\cap A$ between $(0,j)$ 
  and $(m-1,j)$, then include in $\match'$ the vertical edge between $(0,j)$ 
  and $(m+3,j)$.
Note that for each column $j$ of 
  $T_{m+4,n+4}$, $2\leq j\leq n+1$, in which uncovered 
  vertices remain (there are exactly four such vertices in each column) they
  can be covered in a (uniquely determined) way by two vertical edges.
 
Similarly, if there is an edge in $\match\cap B$ between $(i,0)$ and $(i,n-1)$, 
  include in $\match'$ an horizontal edge between $(i,0)$ and $(i,n+3)$,
  and proceed as in the previous paragraph but considering rows 
  instead of columns and horizontal edges instead of vertical ones.

The vertices located in rows $0$, $1$, $m+2$ and $m+3$,
  which also belong to one of the columns $0$, $1$, $n+2$ and $n+3$,
  form $4$ groups of $4$ uncovered vertices.
Cover each such group of $4$ vertices by two vertical 
  edges and include them in $\match'$. 

It is not hard  to see that the described construction is sound, i.e., $\match'$
  is a perfect matching of $T_{m+4,n+4}$.
Moreover, $C_{\match}$ can be naturally identified with $C_{\match'}$, the latter
  of which can be easily seen to be well behaved.

Finally, let $A'$ and $B'$ be defined as $A$ and $B$ but with
  respect to $T_{m+4,n+4}$ instead of $T_{m,n}$.
It is straightforward to verify 
  that $|\match\cap A|=|\match'\cap A'|$, $|\match\cap B|=|\match'\cap B'|$, 
  $|C_{\match}\cap A|=|C_{\match'}\cap A'|$ and $|C_{\match}\cap B|=|C_{\match'}\cap B'|$,
  thus establishing the lemma's statement.
\end{proof}  

We now introduce the abstract setting in which we will henceforth work.
Let $T$ denote the surface of a torus. 
We say that $(R, N, \{A, B\}, \calP_{A,B}, C, O(C), O(R))$ is a \emph{toroidal system} 
  if the following holds (see Figure~\ref{fig:rectangle}):
\begin{itemize}
  \item \textbf{Rectangle:} $R$ is a rectangle of the torus $T$. 
  The \emph{sides} of $R$ are denoted clockwise 
  $H_{t}, V_{l}, H_{b}, V_{r}$.\footnote{Here $t,l,b$ and $r$ stand for top, left, bottom and right, respectively.} We call $H_t, H_b$ the 
  \emph{horizontal sides} and $V_l, V_r$ the \emph{vertical sides} of $R$.

  \item \textbf{Vertices:} $N$ is a set of vertices. 
  An even number of \emph{vertices} are located in each of the 
  four boundary sides of $R$.
  We denote by $A_t, A_b$ (respectively, $B_l, B_r$)
  the vertices located in $H_t, H_b$ (respectively, $V_l, V_r$). 
  It must hold that $|A_t|= |A_b|$ and $|B_l|= |B_r|$. 
Moreover, one vertex is located in each corner of $R$.
Hence, e.g., $|A_t\cap B_l|= 1$. 
These vertices are called \emph{corner vertices}.
 
  Vertices are bi-colored black and white alternately along the 
  boundary of $R$. 
We henceforth assume that the corner vertex located at the intersection of 
  $V_l$ and $H_t$ is colored white and hence completely fix
  the coloring of $A_t,B_l,A_b,B_r$. 
Traversing $H_t, H_b$ starting from $V_l$, 
  (respectively, $V_l, V_r$ starting from $H_t$) 
  one can naturally pair the vertices of 
  $A_t, A_b$ (respectively, $B_l, B_r$) in order of their appearance.
  We call \emph{opposite} vertices each such pair and note that opposite vertices have different color.

\item \textbf{Edges:} Sets $A$ and $B$ are collections of disjoint closed
  curves on $T$. Elements of $A\cup B$ are called \emph{edges}. 
  Each edge in $A$ (respectively, $B$) connects two opposite vertices of
  $A_t, A_b$ (respectively, $B_r, B_l$) and is otherwise disjoint with $R$.
Each edge of $C\cap A$ (respectively, $C\cap B$) has its black
  vertex in $A_t$ and its white vertex in $A_b$ (respectively, black vertex 
  in $B_r$ and its white vertex in $B_l$).

\item \textbf{Outer Matching:}
  $\calP_{A,B}$ is a collection of edges of $A\cup B$ referred to 
  as \emph{matching edges}. 

\item \textbf{Toroidal cycle}:
  $C$ is a non-intersecting non-contractible 
  toroidal loop which is a subset of $R\cup A\cup B$ and avoids
  every corner vertex of $R$. Moreover, each point of $C$ belonging to a side of $R$ is a vertex of $R$. 
  
\item \textbf{Cycle Matching}: $O(C)$ is a collection of disjoint closed 
  intervals of $C\cap R$ such that 
  each element of $O(C)$ is disjoint with each edge of $\calP_{A,B}$,
  has a vertex at one of its endpoints and the other endpoint 
  is a point of the interior of $R$.
Moreover, each vertex in $C$ is covered by $\calP_{A,B}\cup O(C)$.

\item \textbf{Inner Matching}: $O(R)$ is a collection of disjoint
  closed curves contained in $R$ which cover each vertex uncovered by
  $\calP_{A,B}\cup C$ and such that each curve $o$ of $O(R)$ satisfies: 
  \begin{inparaenum}[(1)] 
  \item $o$ is disjoint with $\calP_{A,B}\cup C$, and 
  \item $o$ is the segment of a side of $R$
    between two neighboring vertices, or $o$ intersects the boundary of
    $R$ in exactly one point, which is then a vertex.  
  \end{inparaenum}
\begin{figure}[h]
\centering
\subfigure[{%
  Depiction in the torus $T$ of rectangle $R$ 
  (horizontal sides $H_t$ and $H_b$ shown as
  dashed lines, and vertical sides $V_l, V_r$ shown as solid lines).
  Dotted lines correspond to edges $A\cup B$.}]{\label{fig:rectangle1}%
\begin{picture}(0,0)%
\includegraphics{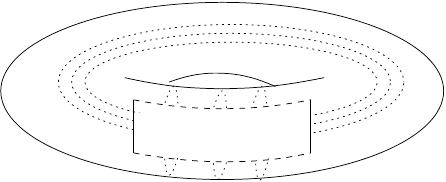}%
\end{picture}%
\setlength{\unitlength}{3108sp}%
\begingroup\makeatletter\ifx\SetFigFont\undefined%
\gdef\SetFigFont#1#2#3#4#5{%
  \reset@font\fontsize{#1}{#2pt}%
  \fontfamily{#3}\fontseries{#4}\fontshape{#5}%
  \selectfont}%
\fi\endgroup%
\begin{picture}(4516,1824)(1343,-2778)
\put(3016,-2176){\makebox(0,0)[lb]{\smash{{\SetFigFont{9}{10.8}{\familydefault}{\mddefault}{\updefault}{\color[rgb]{0,0,0}$H_t$}%
}}}}
\put(3646,-2536){\makebox(0,0)[lb]{\smash{{\SetFigFont{9}{10.8}{\familydefault}{\mddefault}{\updefault}{\color[rgb]{0,0,0}$H_b$}%
}}}}
\put(4546,-2446){\makebox(0,0)[lb]{\smash{{\SetFigFont{9}{10.8}{\familydefault}{\mddefault}{\updefault}{\color[rgb]{0,0,0}$V_r$}%
}}}}
\put(2480,-2498){\makebox(0,0)[lb]{\smash{{\SetFigFont{9}{10.8}{\familydefault}{\mddefault}{\updefault}{\color[rgb]{0,0,0}$V_l$}%
}}}}
\end{picture}%
}
\qquad 
\subfigure[%
  {Depiction of a slice of the torus.
  Bi-colored vertex sets $A_t,B_r,A_b,B_l$ located 
     in the solid straight lines representing the boundary of $R$.
  Dotted curves represent edges in $A\cup B$.
  Double solid lines correspond to edges in $\calP_{A,B}$. 
  Cycle $C$ is represented by the dashed closed curve.}]{\label{fig:rectangle2}%
\input{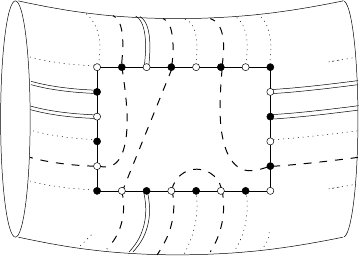_t}}
\caption{\mbox{}}\label{fig:rectangle}
\end{figure}

\item \textbf{Segments}:
Connected components of $C\setminus (A\cup B)$ 
  (respectively, $C\setminus A$, $C\setminus B$) are called \emph{segments} 
  (respectively, $A$-segments, $B$-segments) of $C$ 
  (see Figure~\ref{fig:parts}).
The endpoints of a segment are the only one of its points
  that can belong to the boundary of $R$ (it follows that 
  segments are disjoint).
We will say that a 
  \emph{segment is bi-colored} when located at its endpoints 
  there are vertices of distinct color, otherwise we say the 
  segment is \emph{mono-chromatic}.

We require that the vertices located at the 
  endpoints of $A$-segments (respectively, $B$-segments)
  belong to $A_t$ if they are black and 
  to $A_b$ if they are white (respectively, to $B_r$ if they are black and 
  to $B_l$ if they are white).
\end{itemize}

We define $\calS(\match)=(R,N,\{A,B\},\calP_{A,B},C,O(C),O(R))$, for 
  a well behaved perfect matching $\match$, as follows.
We let $R$ equal the rectangle of the torus with sides the edges 
  in rows $0$ and $m-1$ of $T_{m,n}$, and columns $0$ and $n-1$ of 
  $T_{m,n}$.
We let $N$ be the set of vertices of $T_{m,n}$ located in the boundary 
  of $R$.
The set 
  $A$ (respectively, $B$) corresponds to the set of edges
  of $T_{m,n}$ between vertices in row $0$ and $m-1$ (respectively, $0$ and $n-1$).
The matching $\calP_{A,B}$ will be $\match\cap (A\cup B)$ and the toroidal cycle 
  $C$ is set to $C_{\match}$.
The set $O(C)$ will consist of vertical edges of $C_{\match}\setminus A$ 
  incident 
  to vertices in rows $0$ or $m-1$ that are not covered by edges of $\calP_{A,B}$
  as well as horizontal edges of $C_{\match}\setminus B$ incident to vertices
  in columns $0$ or $n-1$ also not covered by $\calP_{A,B}$.
Finally, the set of $O(R)$ will consist of all edges $\match\setminus (A\cup B)$ 
  not in $O(C)$ which are incident to vertices in rows $0$ or $m-1$,
  or to vertices in columns $0$ and $n-1$.
The following result may be easily verified.

\begin{figure}[h]
\centering
\input{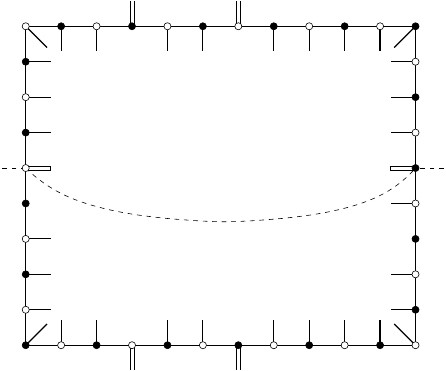_t}
\caption{Toroidal system $\calS(P)$ associated 
  to the well behaved matching $P'$ of Figure~\ref{fig:wellBehaved}
  (boundary of rectangle $R$ drawn as solid line, 
    cycle $C$ shown as dashed line, elements of $O(R)$ shown
  as thick lines, elements of $\calP_{A,B}$ depicted as double lines outside
  $R$ and elements of $O(C)$ depicted also as double lines but inside $R$).}\label{fig:exampleToroidalSystem}
\end{figure}

\begin{proposition}\label{prop:def}
If $\match$ is a well behaved perfect matching of $T_{m,n}$,
  then $\calS(\match)$ is a toroidal system.
\end{proposition}

In fact, we will soon see that the toroidal system guaranteed by 
  the preceding proposition satisfies additional properties.
Below, we first introduce some new terminology and then
  the properties of toroidal systems that we will be concerned with.

An interval $I$ of the loop $H'_t$ (respectively, $H'_b, V'_l, V'_r$) 
  consisting of $H_t$ (respectively, $H_b, V_l, V_r$) and the edge between 
  the endvertices of $H_t$ (respectively, $H_b, V_l, V_r$) will be 
  called \emph{loop interval}. 
 
\begin{figure}[h]
\centering
\input{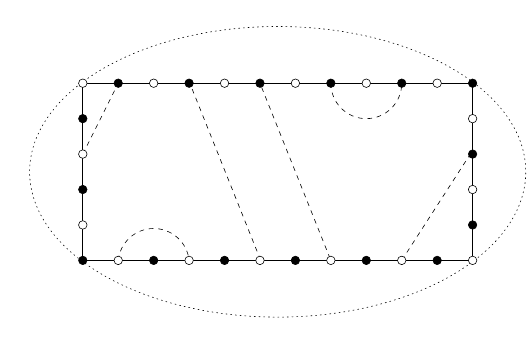_t}
\caption{Illustration of $C\cap R$ for a toroidal cycle 
  $C$ and a rectangle $R$ 
  (the boundary of $R$ is depicted as a solid line and the cycle $C$ by dashed lines/curves): $C$ has
  six segments, five $A$-segments ($A$-lines $L$, $L'$ and $L''$ and 
  ears $E$ and $E'$) and one $B$-segment. 
  The curved dotted lines are the 
  completions of $H_t,V_r,H_b,V_l$ that form $H'_t,V'_r,H'_b,V'_l$, 
  respectively.}\label{fig:parts}
\end{figure}

For a segment
  we distinguish two situations (see again Figure~\ref{fig:parts}):
\begin{itemize}
\item \textbf{Ears:}  if the endpoints of the segment
  belong to the same side of $R$ ($H_t$, $V_r$, $H_b$, or $V_l$).

\item \textbf{Lines:} if the endpoints of 
  the segment belong to 
  opposite sides of $R$ ($H_t$ and $H_b$, or $V_l$ and $V_r$).
\end{itemize}
We analogously define $A$-ears, $B$-ears and $A$-lines, $B$-lines.
Let $D$ be a disc of $T$ bounded by exactly two $A$-lines
  and two loop intervals belonging to
  opposite loops $H'_t$ and $H'_b$.
We call $D$ \emph{$A$-regular} and its boundary loop intervals
  are called \emph{ends} of $D$. 
A toroidal cycle $C$ is called \emph{$A$-regular toroidal cycle} if all 
  its $A$-segments are $A$-lines. 
We define \emph{$B$-regular disc} and \emph{$B$-regular toroidal cycle}
  analogously. 
For instance, in Figure~\ref{fig:parts}, the cycle 
  is $B$-regular but not $A$-regular (because there are 
  two $A$-segments which are ears).
The cycle $C$ is called \emph{regular}, if it is both $A$-regular and 
  $B$-regular.

A toroidal system $(R, N, \{A, B\}, \calP_{A,B}, C, O(C), O(R))$ is called 
  \emph{consistent}, if the following properties hold:

\medskip\hangindent=\parindent%
\textbf{Segment consistency (S1):}
  Let $S$ be a segment of $C$ and let $e$ and $e'$ be 
  the edges whose endpoints are incident to 
  the endvertices of $S$. Then, $S$ is bi-colored if and only if 
  $|\{e, e'\}\cap \calP_{A,B}|$ is even.

\medskip\hangindent=\parindent%
\textbf{Line consistency (L1):}
  If $L$ is an $A$-line (respectively, $B$-line), then it has its black
  endvertex in $A_t$ and its white endvertex in $A_b$ (respectively, 
  black endvertex in $B_r$ and its white endvertex in $B_l$). 
\medskip

\textbf{Ear consistency:} Every ear $E$ is mono-chromatic and satisfies:
\begin{myquote}
\textbf{Condition (E1):} 
If $E$'s endvertices are in $A_t$ (respectively, $B_r$), 
  then their color is black, otherwise their color is white.

\item 
\textbf{Condition (E2):} 
If $D$ is the open disc of $T$ bounded by the $A$-ear (respectively, $B$-ear)
  $E$ and the loop-interval between $E$'s endvertices, 
  then $E$'s endvertices are black if and only if  
  $D$ contains an odd number of endpoints of elements of 
  $O(R)\cup O(C)\cup \calP_{A,B}$.
\end{myquote}

\textbf{Region consistency:} 
\begin{myquote}
\textbf{Condition (R1):}
If $D$ is an open $A$-regular (respectively, $B$-regular) 
  disc, then $D$ contains an
  even number of endpoints of elements of $O(R)\cup O(C)\cup \calP_{A,B}$. 

\item
\textbf{Condition (R2):}
Let $L$ be an $A$-line containing $k\geq 0$ edges of $B$. Let $R'$ be obtained by concatenation of $k+1$ copies of $R$ by its vertical boundaries. This completely determines  $N', A', B', \calP'_{A',B'}, O(R')$ and $O(C')$. 
Let $L'$ denote the cover of $L$ drawn in $R'$: we note that $L'$ has no edge of $B'$. 
If $D$ is an open disc bounded by $L'$, 
  an interval of $H'_b$ (respectively, an interval of $V'_l$),
  the vertical side $V'_r$ (respectively, horizontal side $H'_t$),
  and an interval of $H'_t$ (respectively, $V'_r$),
  then $D$ contains an even number of endpoints of
  elements of $O(R')\cup O(C')\cup \calP'_{A',B'}$.
We also require the analogous condition to hold for $B$-lines.
\end{myquote}

It is not hard to see that the following strengthening of 
  Proposition~\ref{prop:def} holds.
\begin{proposition}\label{prop:defConsistent}
If $\match$ is a well behaved perfect matching of $T_{m,n}$,
  then $\calS(\match)$ is a consistent toroidal system.
\end{proposition}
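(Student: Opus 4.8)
The plan is to verify, item by item, that the concrete data $\calS(M) = (R, N, \{A,B\}, M_{A,B}, C_M, O(C_M), O(R))$ built from a well behaved perfect matching $M$ of $T_{m,n}$ satisfies each of the consistency conditions (S1), (L1), (E1), (E2), (R1), (R2). Since Proposition~\ref{prop:def} already guarantees (after the appropriate ninety-degree rotation) that $\calS(M)$ is a toroidal system, the only work is to check the \emph{consistency} clauses, and each of them translates back into an elementary parity statement about the grid $T_{m,n}$ and the digraph $D_{m,n}(M)$, which we can settle using Lemma~\ref{lem:contractCyclesInterior} and Lemma~\ref{lem:prelimFacts}.

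First I would dispose of the ``local'' conditions (S1) and (L1). For (S1): a segment $S$ of $C_M$ is a maximal subpath of $C_M$ none of whose interior points lie on $A\cup B$; its two endpoints are nodes of $N$, and the edges $e,e'$ of $A\cup B$ incident to those endnodes along $C_M$ are exactly the edges of $C_M$ that leave $R$ there. Because $C_M$ alternates between edges and non-edges of $M$ (as used in the proof of Proposition~\ref{prop:invariance}), walking along $S$ changes the ``matched/unmatched status'' of the boundary edges an even number of times precisely when the colors of the two endnodes agree with the alternation pattern — i.e. when $S$ is bi-colored — so $S$ is bi-colored iff $|\{e,e'\}\cap M_{A,B}|$ is even. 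Condition (L1) and the coloring part of (E1) are immediate from the way black/white nodes on the boundary of $R$ were defined together with Part~(\ref{it:follow}) of Lemma~\ref{lem:prelimFacts}: an $A$-segment is a maximal vertical arc of $C_M$ exiting $R$ through the horizontal edges $A$, and the parity bookkeeping of rows forces its black endnode into $A_t$ and its white endnode into $A_b$.

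The substantive work is in the parity/counting conditions (E2), (R1), (R2). In each case the relevant disc $D$ on the torus is bounded by pieces of $C_M$ together with pieces of the boundary loops of $R$ (and, in (R2), a vertical or horizontal side of a concatenated copy $R'$). I would close up each such disc into a genuine \emph{cycle of the grid} (or of the enlarged grid, in (R2)) whose corners all have the same parity — this parity homogeneity is exactly Part~(\ref{it:parity}) of Lemma~\ref{lem:prelimFacts} for the $C_M$-portion, and is arranged by construction for the straight boundary portions — and then apply Lemma~\ref{lem:contractCyclesInterior}: the number of grid nodes strictly inside is odd. On the other hand, the endpoints of elements of $O(R)\cup O(C)\cup M_{A,B}$ lying inside $D$ are in bijection with the grid nodes of $D$ that are \emph{not} matched by an edge of $M$ entirely interior to $D$, and the nodes that are so matched come in pairs; hence ``number of interior endpoints of $O(R)\cup O(C)\cup M_{A,B}$'' and ``number of interior grid nodes'' have the same parity. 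Combining the two gives: the interior endpoint count is odd for an ear exactly when Lemma~\ref{lem:contractCyclesInterior} applies to its contractible closure, which is the case recorded in (E2); and it is even for the $A$-regular / $B$-regular discs and for the discs of (R2), because those closures encircle an even number of grid nodes (they are symmetric-difference-of-faces type regions, or a product-of-an-even-number region coming from the concatenation of an odd number of copies of an odd-by-something block — the precise count is a routine product-of-parities computation once the boundary is identified).

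I expect the main obstacle to be condition (R2): one must correctly form the $(k+1)$-fold concatenation $R'$ of $R$ along its vertical (resp.\ horizontal) boundaries, lift the $A$-line $L$ to a $B'$-edge-free line $L'$ in $R'$, and check that the resulting disc $D\subseteq T$ (bounded by $L'$, a full vertical side $V'_r$, and two intervals of the horizontal loops) is still bounded by a cycle of the enlarged grid with corners of a single parity, so that the node count inside $D$ is a product of two odd numbers and hence odd — forcing, via the pairing argument above, an \emph{even} interior endpoint count. The bookkeeping of which copies of $R$ contribute which parities, and the verification that the concatenation does not destroy the single-parity-of-corners property inherited from $C_M$, is the delicate point; everything else is a direct translation of Lemmas~\ref{lem:contractCyclesInterior} and~\ref{lem:prelimFacts} into the language of the toroidal system.
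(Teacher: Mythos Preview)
The paper does not actually prove this proposition: it is asserted with the phrase ``it is not hard to see'' and no argument is given. So your sketch already does considerably more than the paper, and the local checks you outline for (S1), (L1) and the coloring half of (E1) are essentially correct---they reduce to the alternation of $C_M$ between $M$-edges and non-$M$-edges together with Part~(\ref{it:follow}) of Lemma~\ref{lem:prelimFacts} (corners of $C_M$ lie at black nodes, and all black nodes of $C_M$ share the same row/column parity).

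Where your sketch has genuine gaps is in the region conditions. First, there is an internal contradiction in your (R2) paragraph: you argue that the interior endpoint count of $O(R)\cup O(C)\cup M_{A,B}$ and the interior grid-node count have the \emph{same} parity (via the pairing argument), but then for (R2) you claim an \emph{odd} node count forces an \emph{even} endpoint count. One of these two statements must be repaired. Second, your proposed use of Lemma~\ref{lem:contractCyclesInterior} for (R1) does not go through as stated. The boundary of an $A$-regular disc consists of two $A$-lines together with loop intervals in $H'_t$ and $H'_b$; the endpoints of the $A$-lines in $A_t$ sit in row~$0$ (an even row) while those in $A_b$ sit in row~$m-1$ (an odd row), and the well-behaved hypothesis forces these endpoints to be corners of the boundary cycle. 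Hence that cycle does \emph{not} have all corners of a single parity, and Lemma~\ref{lem:contractCyclesInterior} is not directly applicable. Your fallback justification (``symmetric-difference-of-faces type regions'') is too vague to carry the weight here.

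The repair is not via Lemma~\ref{lem:contractCyclesInterior} at all for (R1) and (R2), but via a direct analysis of how $M$-edges can cross the boundary of $D$: no $M$-edge crosses an $A$-line (every node on $C_M$ is matched along $C_M$), so the only $M$-edges joining the interior of $D$ to its complement pass through the loop intervals, and each such crossing contributes exactly one endpoint of an element of $O(R)\cup O(C)\cup M_{A,B}$ inside $D$. Counting the boundary nodes on the two loop intervals by color (using (L1) to pin down the colors at the four corners of $D$) then gives the required parity without ever invoking Lemma~\ref{lem:contractCyclesInterior}. The same idea, carried out in the concatenated rectangle, handles (R2).
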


We say that $C$ has \emph{type} $\ceo$ if $C$ has an even number of edges
  of $A$ and an odd number of edges of $B$. 
Analogously, we define types $\cee, \coe, \coo$. 
As usual, a type is also associated to the set $\calP_{A,B}$ according to the 
  parity of its intersection with $A$ and $B$, respectively.

Our goal for the remaining part of this article is to prove 
  the following result (which 
  immediately implies Theorem~\ref{thm.mm1}:

\begin{theorem}\label{thm.haha}
  Let $(R, N, \{A, B\}, \calP_{A,B}, C, O(C), O(R))$ be a consistent toroidal system. If neither the
  type of $C$ nor the type of $\calP_{A,B}$ is $(0,0)$, then the type of $C$ is the
  same as the type of $\calP_{A,B}$.
\end{theorem}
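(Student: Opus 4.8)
The plan is to read the ``type of $C$'' as a $\mathbb{Z}/2$-homology invariant, to encode the ``type of $M_{A,B}$'' by a $\mathbb{Z}/2$-cycle that does not cross $C$, and to let the consistency axioms force the two to agree; the hypothesis is used only at the very end, to discard one of two a priori possibilities. First, since $R$ is a disc of the torus $T$, collapsing $R$ to a point is a homotopy equivalence under which every edge of $A$ (respectively $B$) becomes a loop homologous to the ``vertical'' generator $[\beta]$ (respectively the ``horizontal'' generator $[\alpha]$) while every segment of $C$ collapses; hence the class of $C$ in $H_1(T;\mathbb{Z}/2)$ equals $|C\cap A|\,[\beta]+|C\cap B|\,[\alpha]$. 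As $C$ is non-contractible this class is nonzero, so $C$ is automatically not of type $ee$ and its type \emph{is} this homology class. It therefore suffices to prove the two congruences $|M_{A,B}\cap A|\equiv|C\cap A|$ and $|M_{A,B}\cap B|\equiv|C\cap B|$ modulo $2$, assuming only that $|M_{A,B}\cap A|,|M_{A,B}\cap B|$ are not both even.

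I would first reduce to the case that $C$ is regular. Take an outermost ear $E$, so that the disc $D$ it cuts off together with the loop interval between its endnodes contains no other segment of $C$. Condition (E1) prescribes the colour of $E$'s endnodes and Condition (E2) says the number of endpoints of $O(R)\cup O(C)\cup M_{A,B}$ inside $D$ is odd; using this parity together with (S1), one reroutes $C$ along that loop interval in place of $E$ and re-absorbs the contents of $D$ into $M_{A,B},O(R),O(C)$, obtaining a consistent toroidal system with the same types of $C$ and of $M_{A,B}$ but one fewer ear. Iterating makes $C$ regular. (Should this surgery prove awkward, the alternative is to keep the ears and feed (E2) directly into the region count below; their contribution is a controlled parity either way.)

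For regular $C$, cut $T$ along $C$ to obtain an annulus $\Sigma$. The $A$-lines and $B$-lines of $C$ become pairwise disjoint chords of $\Sigma$ that slice it into discs, each of which --- after the unfolding of Condition (R2), which concatenates copies of $R$ to unwrap the $B$-edges carried by an $A$-line (and symmetrically) --- is an $A$-regular disc, a $B$-regular disc, or an (R2)-disc, hence by (R1)/(R2) carries an even number of endpoints of $O(R)\cup O(C)\cup M_{A,B}$. Summing these parities over all pieces, each endpoint off the chords and off $C$ is counted once, while the endpoints of the $A$-edges and of the $B$-edges of $M_{A,B}$ receive opposite contributions according to which family of lines borders their disc; combined with (L1), which fixes the colours and hence the sides of the endnodes of the lines, and with (S1), the accounting collapses to $|M_{A,B}\cap A|\equiv|C\cap A|$ and $|M_{A,B}\cap B|\equiv|C\cap B|\pmod 2$, both up to one common global parity $\delta\in\{0,1\}$. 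Equivalently, $M_{A,B}\cup O(C)\cup O(R)$, suitably closed up against a fixed reference structure along $\partial R$, yields (after a harmless homotopy) a $\mathbb{Z}/2$-cycle $\Gamma$ with $[\Gamma]\cdot[C]=0$, so $[\Gamma]\in\ker(\,\cdot\,[C])=\{0,[C]\}$. If $\delta=0$, i.e.\ $[\Gamma]=0$, then by the first paragraph $M_{A,B}$ is of type $ee$, which the hypothesis forbids; hence $\delta=1$, $[\Gamma]=[C]$, and the types of $M_{A,B}$ and $C$ coincide.

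The homological observation and the case split are routine; the crux is the combinatorial accounting of the third paragraph: arranging the slicing of $\Sigma$ so that every piece is genuinely one of the consistency discs (precisely the role of the copy-concatenation in (R2)), correctly handling the endpoints that lie on $C$ or on $\partial R$, and checking that the ear removal of the second paragraph preserves every one of (S1), (L1), (E1), (E2), (R1), (R2). Extracting the single correction $\delta$ from this count, and recognizing $\delta=0$ as exactly the statement ``$M_{A,B}$ is of type $ee$'', is the step on which the use of the hypothesis rests and which I expect to be the main difficulty.
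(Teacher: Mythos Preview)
Your two-stage plan (eliminate ears, then exploit region consistency) is the paper's plan, and the $\mathbb{Z}/2$-homology language is a legitimate repackaging of the conclusion. But both stages, as you describe them, have genuine gaps.

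For the ear surgery: rerouting $C$ along the loop interval between $E$'s endnodes puts an entire segment of the new cycle on $\partial R$, contradicting the segment axiom (only the endpoints of a segment may meet $\partial R$) and possibly hitting a corner node. The paper's remedy is different and more delicate: take a \emph{minimal} black-endnodes $A$-ear, attach a new disc $A_b(E)$ to $R$ on the \emph{opposite} side $H_b$, and relocate $e_1Ee_2$ into that disc (Proposition~\ref{p.1}). Verifying that (S1), (L1), (E1), (E2), (R1), (R2) all survive this operation---in particular that the new segment $S_1e_1Ee_2S_2$ is consistent in each of three possible configurations of $S_1,S_2$---is the bulk of that proposition and is not something your rerouting would sidestep.

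For the regular case: your sentence ``the $A$-lines and $B$-lines of $C$ become chords of $\Sigma$'' is wrong---lines are pieces of $C$, hence land on $\partial\Sigma$ after cutting, not across it. What you presumably want is the decomposition of $R$ by the segments of $C$; but an $A$-line may contain $B$-edges, so the resulting pieces are not literally $A$-regular or $B$-regular discs, and invoking (R1) directly fails. The concatenation in (R2) is precisely the device to straighten this out, and the paper uses it \emph{globally} rather than disc-by-disc: once $C$ is regular with $|A\cap C|=k$, $|B\cap C|=l$, it concatenates $l$ horizontal and then $k$ vertical copies of the whole system to force $|A'\cap C'|=|B'\cap C'|=1$, after which (Proposition~\ref{p.2}) a single application of (R2) to one disc finishes. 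Finally, your cycle $\Gamma$ is underspecified: $O(C)\subset C$, so $\Gamma$ as written meets $C$ along intervals, and the equality $[\Gamma]\cdot[C]=0$ you assert is not a topological freebie but precisely the combinatorial identity that (R1)/(R2) are there to establish.
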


Henceforth, we say that two toroidal systems
  $\calS=(R, N, \{A, B\}, \calP_{A,B}, C, O(C), O(R))$ and $\calS'=(R', N',
  \{A', B'\}, \calP'_{A',B'}, C', O(C'), O(R'))$ are \emph{equivalent} if the type
  of $C$ is the same as the type of $C'$ and the type of $\calP_{A,B}$ is the
  same as the type of $\calP'_{A',B'}$.

Theorem~\ref{thm.haha} will be a direct consequence 
  of the the next two results.
\begin{proposition}\label{p.1}
If $\calS=(R, N, \{A, B\}, \calP_{A,B}, C, O(C), O(R))$ is a consistent toroidal 
  system where neither the type of $C$ nor the type of $\calP_{A,B}$ is $(0,0)$, 
  then either Theorem~\ref{thm.haha} holds for 
  $\calS$ or there is a regular consistent
  toroidal system $\calS'$ which is equivalent to $\calS$.
\end{proposition}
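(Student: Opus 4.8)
The plan is to take a consistent toroidal system $\calS$ and repeatedly remove "irregularities" — that is, $A$-ears and $B$-ears — by local surgeries that preserve consistency and preserve the types of both $C$ and $M_{A,B}$, until either we obtain a regular system or we detect that Theorem~\ref{thm.haha} already holds. The main structural input is that an ear $E$, by definition, has both endnodes on the same side of $R$ and (by Ear consistency) is mono-chromatic, so the loop interval $I$ joining its endnodes together with $E$ bounds an open disc $D$. The idea is to "cut off" $D$: replace the portion of $C$ running along $I$ (through $R$ and possibly through edges of $A\cup B$) by the ear $E$ itself, or symmetrically replace $E$ by that portion, thereby reducing the number of ears. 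Because an ear lies on one side of $R$ and uses at most edges of a single one of $A$ or $B$ (an $A$-ear touches only $H_t$ or $H_b$, so by the definition of $A$-segments it contains no edge of $B$, and vice versa), such a surgery changes $|C\cap A|$ and $|C\cap B|$ by controlled amounts whose parities we can read off from Conditions~(E1), (E2), and~(R1).

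First I would set up the surgery precisely: pick, say, an $A$-ear $E$ with endnodes on $H_t$ that is innermost in the sense that the disc $D$ it cuts off contains no other ear endpoints (such an ear exists whenever any $A$-ear exists, by taking one minimizing the number of nodes in $D$). By~(E2), the color of $E$'s endnodes is determined by the parity of the number of endpoints of $O(R)\cup O(C)\cup M_{A,B}$ inside $D$; by~(E1) that color must be black since the endnodes are in $A_t$. I would then show that the arc $I$ of $H'_t$ between $E$'s endnodes, together with the structure inside $D$, can be rerouted: the matching/cycle data inside $D$ is rigid enough (the inner matching $O(R)$ and the intervals $O(C)$ cover the interior nodes in an essentially unique way, much as in Lemma~\ref{lem:betterBehaved}) that replacing $E$ by a curve running just inside $R$ along $I$, and correspondingly adjusting $O(C)$ and $O(R)$ inside $D$, yields a new toroidal system $\calS'$. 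The parity bookkeeping is where (E2) and (R1) earn their keep: (R1) guarantees that an $A$-regular disc has an even count, so the replaced region, once it ceases to be bounded by an ear, contributes evenly, and (E2) tells us the ear contributed the "odd defect" exactly when its endnodes were black — which they always are for $A_t$-ears by (E1). Tracking this shows the surgery flips $|C\cap A|$ by an even number and leaves $|C\cap B|$ unchanged, so the type of $C$ is preserved; $M_{A,B}$ is literally untouched. One must also re-verify (S1), (L1), (E1), (E2), (R1), (R2) for $\calS'$ — the nontrivial checks being (R1) and (R2), which follow because the new $A$-line (or the merging of two $A$-lines) created by the surgery bounds regions obtainable by gluing the old regions along the cut-off disc, whose counts add.

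The induction is on the total number of ears of $C$. If $\calS$ has no ears it is already regular and we are done. Otherwise apply the surgery to an innermost ear to get an equivalent consistent $\calS'$ with strictly fewer ears, and recurse. There is, however, one genuinely delicate case: the surgery might be obstructed if cutting off the ear would disconnect $C$ or make it contractible — this happens precisely when the ear "wraps around" in a way that the rerouted curve is no longer a non-contractible non-intersecting toroidal loop. I expect this to be the main obstacle, and the resolution I would aim for is to show that in exactly this obstructed configuration one can already read off the type of $C$ directly and verify it equals the type of $M_{A,B}$ — i.e., Theorem~\ref{thm.haha} holds for $\calS$ outright, which is the first alternative in the statement of the proposition. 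Concretely, the obstruction forces $C$ to be, up to the outer matching, a single ear plus a loop interval, whose $A$- and $B$-parities are pinned down by (E1)–(E2) and the hypothesis that neither type is $ee$; comparing with (S1) applied around that ear identifies the type of $M_{A,B}$. So the two alternatives in the proposition correspond exactly to the non-obstructed (keep reducing toward regular) and obstructed (type already forced) branches of the surgery.
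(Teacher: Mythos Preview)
Your overall plan---remove ears one at a time while preserving consistency and the types of $C$ and $M_{A,B}$---matches the paper's, but the surgery you describe is not the one that works, and the mismatch is substantive.

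You propose to ``replace $E$ by a curve running just inside $R$ along $I$'' and to ``cut off $D$''. Any rerouting of $C$ that stays inside the same rectangle $R$ runs into trouble: a curve from $x_1$ to $x_2$ hugging $H_t$ is still an $A$-ear; a curve from $y_1$ to $y_2$ near $H_b$ would merge $S_1,E,S_2$ into one segment, but it has to pass near the nodes on $H_b$ between $y_1$ and $y_2$, and those nodes are still on $\partial R$ and still carry elements of $O(R)$ or $M_{A,B}$ which $C'$ is not allowed to cross. This is exactly why you find yourself worrying about an ``obstructed'' case where $C'$ might become contractible or disconnected---that branch is an artifact of trying to modify $C$ rather than $R$. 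The paper never encounters such an obstruction, and the first alternative in the proposition (``Theorem~\ref{thm.haha} already holds'') is in fact never invoked in its proof.

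The paper's surgery modifies $R$, not $C$. Starting from a \emph{minimal} black-endnodes-$A$-ear $E$ with adjacent edges $e_1,e_2\in A$ (black endnodes $x_1,x_2\in A_t$, white endnodes $y_1,y_2\in A_b$), it glues a new disc $A_b(E)$ onto the loop interval of $H'_b$ between the neighbors $z_1,z_2$ of $y_1,y_2$, places copies of the intervening nodes on the outer boundary of that disc, and reroutes the edges $e_1,e_2$ (and the $A$-edges between them) to the new copies. The curve $C$ itself is unchanged as a subset of the torus; what changes is that the piece $e_1Ee_2$ now lies in the interior of $R'$, so the three old $A$-segments $S_1,E,S_2$ become a single new segment $S=S_1e_1Ee_2S_2$. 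The reduction in ear count and the preservation of the homotopy type of $C$ are then automatic.

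The verification that $\calS'$ is consistent is also more structured than your sketch suggests: it splits into three cases according to whether $S_1,S_2$ are $A$-lines or white-endnodes-$A$-ears, and in each case one checks (S1), (L1)/(E1), (E2), (R1), (R2) for the new segment $S$ and for old ears/regions whose disc now meets $A_b(E)$. The key parity fact you correctly anticipated---that the number $|R(E)|$ of $O(R)$-endpoints inside $D$ is odd---is used, together with the observation that exactly $|R(E)|+1$ endpoints are removed from the old picture and the same number added on the new outer boundary, to push (E2) and (R1)/(R2) through.
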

\begin{proof}
We describe how to iteratively construct a new $A$-regular 
  consistent toroidal system, say 
  $\calS'= (R', N', \{A', B'\}, \calP'_{A',B'}, C', O(C'), O(R'))$, 
  equivalent to $\calS$.
The system $\calS'$ is then analogously adjusted (arguing with 
  respect to $B$ instead of $A$) to yield the 
  claimed consistent regular system.

Let $E$ be a black-endvertices-$A$-ear which is \emph{minimal}, i.e., 
  $C$ does not intersect the loop interval between the endvertices of $E$. 
Let $e_1$ and $e_2$ be the edges of $A\cap C$ which are adjacent to $E$. 
Let the black-endvertex of $e_i$ be denoted by $x_i$, and 
  the white-endvertex by $y_i$, $i=1,2$. 
Denote by $D$ the open disk with boundary $E$ and the loop interval between
  $x_1$ and $x_2$.
By the choice of~$E$, the loop intervals between $x_1, x_2$ and between 
  $y_1, y_2$ do not intersect $C$.
Because of consistency conditions (S1) and (E1), 
  there is exactly one edge of $\calP_{A,B}\cap C$ incident
  with an endvertex of $E$: without loss of generality let it be $e_1$. Let
us denote by $S_1$ the segment of $C$ prolonging $e_1$ and by $S_2$ the
segment of $C$ extending $e_2$.

Next, we establish the following claim; 
  the set of edges of $\calP_{A,B}$ that are incident with the 
  interior of the loop interval between $x_1$ and $x_2$, henceforth
  denoted $\calP_{A,B}(E)$, has an even cardinality.
Indeed, because  (E2) holds for $E$, there is an odd number of endpoints of 
  elements of $O(R)\cup O(C)\cup \calP_{A,B}$ contained in $D$.
By minimality of $E$, none of the elements 
  of $O(C)$ is contained in $D$
  and each element of $\calP_{A,B}$
  is either contained in $D$ or disjoint with it.
Hence, the interior of the loop interval between $x_1$ and $x_2$ is 
  incident with an odd number of curves of $O(R)$; 
  let us denote by $R(E)$ the set consisting of these curves.
All (the odd number of) vertices in the 
  interior of the loop interval between $x_1$ and $x_2$ 
  not covered by elements of $R(E)$ must be covered by elements of $\calP_{A,B}(E)$.
We conclude that $|\calP_{A,B}(E)|+|R(E)|$ and $|R(E)|$ are both odd, and hence
  $|\calP_{A,B}(E)|$ is even, thus proving the claim.

There are three possibilities for the types of $S_1$ and $S_2$: 
  \begin{inparaenum}[(1)]
  \item\label{it:case1}
     $S_1$ is an $A$-line and $S_2$ is a white-endvertices-$A$-ear, 
  \item\label{it:case2} 
     both $S_1$ and $S_2$ are  $A$-lines, or
  \item\label{it:case3}  
     both $S_1$ and $S_2$ are white-endvertices-$A$-ears. 
Each possible case is illustrated in Figure~\ref{fig:sit1}, \ref{fig:sit2},
  and~\ref{fig:sit3}: note that for 
  the last two cases there are two subcases 
  depending on the relative positions of $S_1$ and $S_2$.
  \end{inparaenum}
\begin{figure}[h]
\centering
\subfigure[{Case~\eqref{it:case1}.}]{\label{fig:sit1}
\begin{picture}(0,0)%
\includegraphics{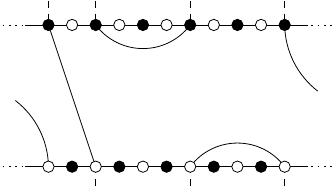}%
\end{picture}%
\setlength{\unitlength}{3315sp}%
\begingroup\makeatletter\ifx\SetFigFont\undefined%
\gdef\SetFigFont#1#2#3#4#5{%
  \reset@font\fontsize{#1}{#2pt}%
  \fontfamily{#3}\fontseries{#4}\fontshape{#5}%
  \selectfont}%
\fi\endgroup%
\begin{picture}(3174,1824)(1114,-2998)
\put(3196,-2446){\makebox(0,0)[lb]{\smash{{\SetFigFont{10}{12.0}{\familydefault}{\mddefault}{\updefault}{\color[rgb]{0,0,0}$S_2$}%
}}}}
\put(1891,-2266){\makebox(0,0)[lb]{\smash{{\SetFigFont{10}{12.0}{\familydefault}{\mddefault}{\updefault}{\color[rgb]{0,0,0}$S_1$}%
}}}}
\put(2431,-1816){\makebox(0,0)[lb]{\smash{{\SetFigFont{10}{12.0}{\familydefault}{\mddefault}{\updefault}{\color[rgb]{0,0,0}$E$}%
}}}}
\end{picture}%
}

\subfigure[{Case~\eqref{it:case2}.}]{\label{fig:sit2}
\begin{picture}(0,0)%
\includegraphics{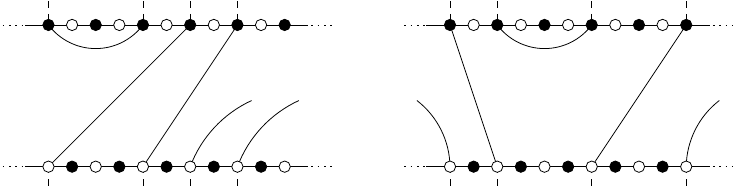}%
\end{picture}%
\setlength{\unitlength}{3315sp}%
\begingroup\makeatletter\ifx\SetFigFont\undefined%
\gdef\SetFigFont#1#2#3#4#5{%
  \reset@font\fontsize{#1}{#2pt}%
  \fontfamily{#3}\fontseries{#4}\fontshape{#5}%
  \selectfont}%
\fi\endgroup%
\begin{picture}(6999,1824)(1114,-2998)
\put(5716,-2266){\makebox(0,0)[lb]{\smash{{\SetFigFont{10}{12.0}{\familydefault}{\mddefault}{\updefault}{\color[rgb]{0,0,0}$S_1$}%
}}}}
\put(6256,-1816){\makebox(0,0)[lb]{\smash{{\SetFigFont{10}{12.0}{\familydefault}{\mddefault}{\updefault}{\color[rgb]{0,0,0}$E$}%
}}}}
\put(1846,-1816){\makebox(0,0)[lb]{\smash{{\SetFigFont{10}{12.0}{\familydefault}{\mddefault}{\updefault}{\color[rgb]{0,0,0}$E$}%
}}}}
\put(7066,-2446){\makebox(0,0)[lb]{\smash{{\SetFigFont{10}{12.0}{\familydefault}{\mddefault}{\updefault}{\color[rgb]{0,0,0}$S_2$}%
}}}}
\put(1846,-2266){\makebox(0,0)[lb]{\smash{{\SetFigFont{10}{12.0}{\familydefault}{\mddefault}{\updefault}{\color[rgb]{0,0,0}$S_1$}%
}}}}
\put(3061,-1996){\makebox(0,0)[lb]{\smash{{\SetFigFont{10}{12.0}{\familydefault}{\mddefault}{\updefault}{\color[rgb]{0,0,0}$S_2$}%
}}}}
\end{picture}%
}

\subfigure[{Case~\eqref{it:case3}.}]{\label{fig:sit3}
\begin{picture}(0,0)%
\includegraphics{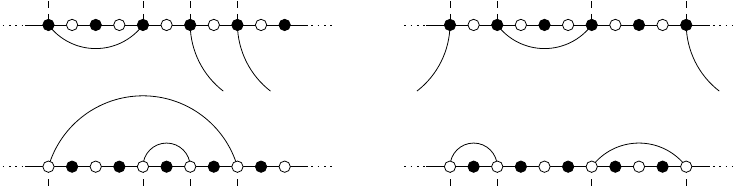}%
\end{picture}%
\setlength{\unitlength}{3315sp}%
\begingroup\makeatletter\ifx\SetFigFont\undefined%
\gdef\SetFigFont#1#2#3#4#5{%
  \reset@font\fontsize{#1}{#2pt}%
  \fontfamily{#3}\fontseries{#4}\fontshape{#5}%
  \selectfont}%
\fi\endgroup%
\begin{picture}(6999,1824)(1114,-2998)
\put(6256,-1816){\makebox(0,0)[lb]{\smash{{\SetFigFont{10}{12.0}{\familydefault}{\mddefault}{\updefault}{\color[rgb]{0,0,0}$E$}%
}}}}
\put(1846,-1816){\makebox(0,0)[lb]{\smash{{\SetFigFont{10}{12.0}{\familydefault}{\mddefault}{\updefault}{\color[rgb]{0,0,0}$E$}%
}}}}
\put(1711,-2266){\makebox(0,0)[lb]{\smash{{\SetFigFont{10}{12.0}{\familydefault}{\mddefault}{\updefault}{\color[rgb]{0,0,0}$S_1$}%
}}}}
\put(2521,-2491){\makebox(0,0)[lb]{\smash{{\SetFigFont{10}{12.0}{\familydefault}{\mddefault}{\updefault}{\color[rgb]{0,0,0}$S_2$}%
}}}}
\put(5491,-2491){\makebox(0,0)[lb]{\smash{{\SetFigFont{10}{12.0}{\familydefault}{\mddefault}{\updefault}{\color[rgb]{0,0,0}$S_1$}%
}}}}
\put(7021,-2491){\makebox(0,0)[lb]{\smash{{\SetFigFont{10}{12.0}{\familydefault}{\mddefault}{\updefault}{\color[rgb]{0,0,0}$S_2$}%
}}}}
\end{picture}%
}

\caption{\mbox{}}
\end{figure}

By consistency conditions (L1) and (E1), the vertices
  located at the endpoints of $S_1e_1Ee_2S_2$ are colored black and white,
  respectively, in case~\eqref{it:case1}; 
  are both black, in case~\eqref{it:case2}; and, are both white, in 
  case~\eqref{it:case3}.

\begin{figure}[h]
\centering
\subfigure[{Illustration of a portion of a toroidal system $\calS$ containing 
  a black-endvertices-$A$-ear labeled $E$.
  Cycle $C$ shown as dashed line, edges in $A$ shown as dotted lines, elements 
  of $O(R)$ shown as thick lines, 
  elements of $\calP_{A,B}$ depicted as double lines outside $R$, and elements 
  of $O(C)$ depicted also as double lines but inside $R$.}]{\label{fig:transf}
\input{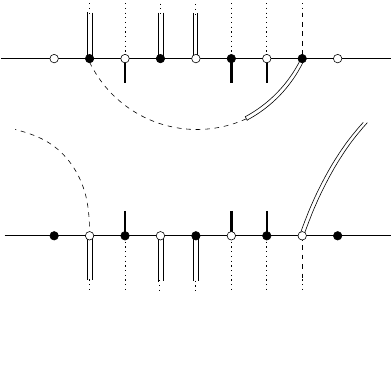_t}}
\qquad
\subfigure[{Illustration of the portion of the toroidal system 
  $\calS'$ obtained by applying (to the system $\calS$ illustrated in 
  Figure~\ref{fig:transf}) the procedure described in the proof of 
  Proposition~\ref{p.1}. The components $C'$, $\calP'_{A',B'}$, and the elements of 
  $O(R')$ and  $O(C')$ are depicted following the same conventions as 
  in Figure~\ref{fig:transf}.}]{\label{fig:transfb}
\input{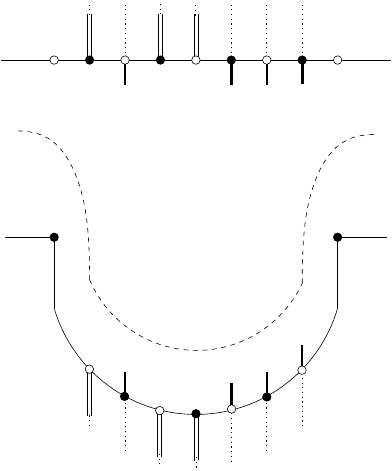_t}}
\caption{\mbox{}}
\end{figure}

Let $z_i$ be the neighbor of $y_i$ along $H_b$ (see Figure~\ref{fig:transf}). 
We construct $R'$ by attaching a new disc $A_b(E)$ to the loop 
  interval between $z_1$ and $z_2$; in the interior of the outer boundary of 
  $A_b(E)$ we place the same number of vertices as there 
  are in the loop interval of $R$ 
  between $z_1$ and $z_2$ (see again Figure~\ref{fig:transf}). 
We then copy $e_1Ee_2$ to the just added disc and 
  obtain $C'$ from $C$ as shown in Figure~\ref{fig:transfb}. 
Next, we remove vertices that end up in the interior of $R'$.
This completely determines the new vertex set $N'$. 
Every $A$-edge $e$ incident to vertices $x'_i$ and $y'_i$, 
  $x'_i$ between $x_1$ and $x_2$ (including them) and
  $y'_i$ between $y_1$ and $y_2$ (also including them),
  is replaced by an edge $e'$ with endvertices 
  $x'_i$ and the copy of $y'_i$ that was placed in the outer boundary
  of $A_b(E)$.
All other edges are preserved.
This completely determines the new set of edges $A'$ and $B'$.
Note that there is a one-to-one correspondence between old $A/B$-edges and 
  new $A'/B'$-edges.
The set of new edges associated to edges in $\calP_{A,B}$ determine $\calP'_{A',B'}$. 
The curves of $O(C')$ are those elements of $O(C)$ contained in $C'$
  which are incident to the boundary
  of $R'$ (see Figure~\ref{fig:transfb}).
Finally, we add to $O(R')$ all elements in 
  $O(R)$ that are incident to the boundary of $R'$ 
  as well as curves that match all the
  not-yet-covered new vertices (see Figures~\ref{fig:transf} 
  and~\ref{fig:transfb}).

We need to show that the new system $\calS'$ is consistent.
First, note that $S=S_1e_1Ee_2S_2$ is the only new segment created. 
In case~\eqref{it:case1}, 
  $S$ is a line whose endvertex in $A_t$ is the corresponding
  vertex of $S_1$ (hence, by consistency condition (L1) of $S_1$, 
  it is colored black) and whose other endvertex
  in $A_b$ is the corresponding one of $S_2$ (hence, by (E1) of
  $S_2$, it is colored white).
It follows that the line $S$ satisfies condition (L1).
In case~\eqref{it:case2}, $S$ is a black-endvertices-$A$-ear with
  endvertices in $A_t$; and, in case~\eqref{it:case3}, 
  $S$ is a white-endvertices-$A$-ear with endvertices 
  in $A_b$. 
Thus, for these latter two cases, condition (E1) is satisfied.

We show next that $S$ satisfies consistency condition (S1).
Let $e'_1$ be the edge incident to the endvertex of $S_1$ not covered by $e_1$.
Similarly, let $e'_2$ be the edge incident to the endvertex of $S_2$ not
  covered by $e_2$.
Because consistent lines are bi-colored and consistent ears are monochromatic,
  we have that $S$ is bicolored if and only if case~\eqref{it:case1} holds. 
Thus, to establish that (S1) holds in $\calS'$ we need 
  to show that $|\{e'_1,e'_2\}\cap \calP'_{A',B'}|$ is even if and only if 
  case~\eqref{it:case1} holds.
Since by condition (E1), the term $|\{e_1,e_2\}\cap \calP_{A,B}|$ is odd, 
  we get that the parity of $|\{e'_1,e'_2\}\cap \calP'_{A',B'}|$ differs
  from the parity of $|\{e'_1,e_1\}\cap \calP_{A,B}|+|\{e_2,e'_2\}\cap \calP_{A,B}|$,
  which by consistency conditions (L1) and (E1) of $\calS$ is odd
  if and only if case~\eqref{it:case1} occurs.
This finishes the proof that $\calS'$ satisfies (S1).

It is important to note that regarding $B$-ears and $B$-regular discs, 
  we did not change the system. 
Hence, $B$-ears and open $B$-regular discs stay consistent. 

Next, we show that $A$-ears and regions of 
  $\calS'$ are also consistent. 
First, let $\widetilde{E}$ be an $A$-ear of $\calS'$ which does not involve 
  the new segment $S$. 
Clearly, $\widetilde{E}$ is an $A$-ear in $\calS'$ as well.
Observe that unless the disc $\widetilde{D}$ of the consistency
  condition (E2) of $\widetilde{E}$ intersects $A_b(E)$, condition
  (E2) is clearly satisfied.
Thus, assume $\widetilde{D}$ intersects $A_b(E)$.
Note that by minimality
  of $E$, $|R(E)|+1$ endpoints are removed from the interior of 
  $\widetilde{D}\setminus A_{b}(E)$ and exactly the same number is 
  added to the interior of $A_{b}(E)$.
More precisely, $|R(E)|$ of the removed points are endpoints of 
  elements of $O(R)$ incident to vertices between $y_1$ and $y_2$ and 
  the remaining removed point is the endpoint of the element of 
  $O(C)$ covering $y_2$.
Similarly, $|R(E)|+1$ of the added points 
  belong to elements of $O(R')$ covering the vertices of 
  the outer boundary of $A_b(E)$.
The same reasoning establishes conditions (R1) and (R2) in 
  $\calS'$ when the region's boundary does not contain the new segment $S$.

We next argue that if $S$ forms an $A$-ear, then (E2) holds for $S$ in 
  $\calS'$.
In case~\eqref{it:case1}, no new ear is created. 
In case~\eqref{it:case3}, 
  the new segment $S$ is a white-endvertices-$A$-ear and the 
  interior of its open disc $D(S)$ consists of the interior of the open 
  disc $D(S_1)$ of $S_1$, the interior of the open disc $D(S_2)$ of $S_2$, 
  and the points on elements of $O(R')$ covering the vertices of the 
  outer boundary of $A_b(E)$; we argued above that their number is even. 
Hence, by consistency of $S_1$ and $S_2$, segment $S$ satisfies (E2).
In case~\eqref{it:case2}, the new segment $S$ is a black-endvertices-$A$-ear. 
Because (L1) and (E2) hold for~$\calS$, 
  the difference between the number of endpoints of elements of 
  $O(R)\cup O(C)\cup \calP_{A,B}$ in its interior and in the open $A$-regular 
  disc of $\calS$ defined by $S_1$ and $S_2$ is 
  exactly $|R(E)|$, which is odd.  
Hence, the new ear $S$ also satisfies (E2).

It remains to check (R1) and (R2) for the regions which contain the 
  new segment $S$ as a line in their boundary. 
This situation only arises in case~\eqref{it:case1} 
  when $S_1$ is a line and $S_2$ is a white-endvertices-$A$-ear in $\calS$. 
Next, we determine the difference of the number of the inner points 
  on each side of $S$ with respect to $S_1$. 
On one side we add the even number of endpoints of elements 
  of $O(R')$ covering the vertices of the outer boundary of $A_b(E)$. 
To the same side we add the points in the open disc $D(S_2)$ of $S_2$, 
  whose number is even by (E2).  
On the other side we delete $|R(E)|$ (odd number) endpoints covered 
  by elements of $O(R)$ incident with the interior of the loop 
  interval between $y_1$ and $y_2$. 
We further delete an even number of endpoints in the interior of $D(S_2)$, 
  and finally we delete exactly one endpoint in the interior of $S_2$
  (the endpoint in the interior of $R$ of the element of $O(C)$ incident
  to $y_2$).
Summarizing, the aforementioned difference on each side of $S$ with respect to 
  $S_1$ is even and thus both conditions  (R1) are (R2) are 
  established.
\end{proof}

Summing up, if there is a counterexample to Theorem~\ref{thm.haha},
  then there is a regular 
  counterexample~$\calS=(R, N, \{A, B\}, \calP_{A,B}, C, O(C), O(R))$.
Assume $\calS$ is such that  $|A\cap C|= k$ and $|B\cap C|= l$.  
Let $\calS'=(R', N', \{A', B'\}, \calP'_{A',B'}, C', O(C'), O(R'))$ 
  be the system obtained by first concatenating $l$ copies of 
  $\calS$ by the vertical boundaries of $R$, and then concatenating 
  $k$ copies of the resulting system by the horizontal boundaries of $R$. 
We claim that $\calS'$ 
  is a regular consistent toroidal system.
Indeed, the regularity is clear since $C'$ has exactly one edge of $A'$ and
  exactly one edge of $B'$. 
Thus, conditions (E1) and (E2) vacuously hold for $\calS'$.
Moreover, conditions (S1) and (L1) are easily seen to be inhereted 
  by~$\calS'$ from $\calS$.
Hence, the region consistency conditions
  (R1) and (R2) need to be checked and easily seen to follow from the
  consistency of $\calS$. 

If both $k$ and $l$ are odd, then the type of $\calP'_{A',B'}$ is the same as 
  the type of $\calP_{A,B}$. 
Let exactly one of $k, l$, say $k$, be even. 
Furthermore, let $k> 0$. 
Since~$\calS$ is a counterexample to Theorem~\ref{thm.haha}, 
  it must be that $\calP_{A,B}$ has an odd number of edges of $A$ and
  thence $\calP'_{A',B'}$ has an odd number of edges of $A'$.
Clearly, $\calP'_{A',B'}$ will have an even number of edges of $B'$.
Since $C'$ has exactly one edge in each of $A'$ and $B'$, 
  it follows that $\calS'$ would also be a counterexample to 
  Theorem~\ref{thm.haha}.
Summarizing, if there is a counterexample to Theorem~\ref{thm.haha},
  then there is a counterexample $\calS$ with both
  $A\cap C$ and $B\cap C$ having at most one edge.

\begin{proposition}\label{p.2}
If $\calS=(R, N, \{A, B\}, \calP_{A,B}, C, O(C), O(R))$ is a regular consistent toroidal
  system with both $A\cap C$ and $B\cap C$ consisting of at most one edge, 
  then $\calS$ satisfies Theorem~\ref{thm.haha}.
\end{proposition}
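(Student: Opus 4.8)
The plan is to dispatch the statement by a direct, if somewhat lengthy, parity count, using the consistency conditions in the hypothesis of Theorem~\ref{thm.haha} as black boxes and exploiting the fact that $C$ is now essentially trivial in shape.

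First I would reduce the number of cases. Since we are given that the type of $C$ is not $ee$, and $|A\cap C|\le 1$ and $|B\cap C|\le 1$, the type of $C$ is $oe$ (so $|A\cap C|=1$, $|B\cap C|=0$), or $eo$ (so $|A\cap C|=0$, $|B\cap C|=1$), or $oo$ (so $|A\cap C|=|B\cap C|=1$). The definition of a consistent toroidal system is symmetric under interchanging the roles of $A$ and $B$ together with the corresponding relabelling of the sides of $R$ and of the node sets $A_t,A_b,B_l,B_r$, so it is enough to treat the types $oe$ and $oo$. In each case the goal is to show that $M_{A,B}$ has the same type as $C$: in the $oe$ case that $|M_{A,B}\cap A|$ is odd and $|M_{A,B}\cap B|$ is even, and in the $oo$ case that both $|M_{A,B}\cap A|$ and $|M_{A,B}\cap B|$ are odd. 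Note that \emph{both} parities must be pinned down, since a type-$oe$ cycle together with, say, a type-$oo$ outer matching would already violate the statement.

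Consider the type-$oe$ case. Here $C$ is the union of its unique $A$-edge $e$ and the single segment $L=C\setminus e$, which by $A$-regularity and line consistency (L1) is an $A$-line whose black endnode $x$ lies in $A_t$ and whose white endnode $y$ lies in $A_b$. The simple arc $L$ splits $R$ into two discs: $D_\ell$, bounded by $L$, the side $V_l$, the portion of $H_t$ between $V_l$ and $x$, and the portion of $H_b$ between $V_l$ and $y$; and $D_r$, the analogous disc on the $V_r$ side. Each of these is exactly of the form to which region consistency (R2) applies, with its parameter $k$ equal to $0$ because $L$ contains no edge of $B$; hence each of $D_\ell,D_r$ contains an even number of endpoints of elements of $O(R)\cup O(C)\cup M_{A,B}$. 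I would then convert this into the desired parities by a node count: every node lies in $\overline{D_\ell}$ or in $\overline{D_r}$, with the nodes of $L$ and the corner nodes on the common boundary, and every node is covered exactly once by $M_{A,B}\cup O(C)\cup O(R)$. Tallying, for $D_\ell$, the boundary nodes on $V_l$ and on the two portions of $H_t,H_b$ against the edges of $M_{A,B}$ incident with them, against the elements of $O(R)$ met, and against the elements of $O(C)$ strung along $L$, one rewrites the even count of (R2) as a congruence in the number of $A$-edges and the number of $B$-edges of $M_{A,B}$ meeting $\overline{D_\ell}$; doing the same for $D_r$ and adding the two congruences, the $O(C)$-terms along $L$ cancel in pairs up to one residual contribution forced by $e$ being the only $A$-edge of $C$, and what survives is precisely that $|M_{A,B}\cap A|$ is odd and $|M_{A,B}\cap B|$ is even. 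The type-$oo$ case I would handle in the same spirit: $C$ is, read cyclically, $e_A\,P_1\,e_B\,P_2$ with $e_A$ the unique $A$-edge, $e_B$ the unique $B$-edge and $P_1,P_2$ arcs of $C\cap R$; by $A$-regularity $C\setminus e_A$ is an $A$-line (containing $e_B$) and by $B$-regularity $C\setminus e_B$ is a $B$-line (containing $e_A$). Passing, as prescribed by condition (R2), to the concatenation of copies of $R$ along which these lines unroll to $B$-free and $A$-free arcs respectively, the two unrolled lines together with $e_A,e_B$ cut the resulting surface into discs, each of which falls under region consistency (R1) or (R2); running the same node-versus-endpoint tally over these discs constrains the parities of $|M_{A,B}\cap A|$ and of $|M_{A,B}\cap B|$ simultaneously, and each is forced to be odd, so $M_{A,B}$ is of type $oo$ like $C$. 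Together with Proposition~\ref{p.1} and the concatenation reduction stated just before this proposition, this completes the proof of Theorem~\ref{thm.haha}, and hence of the outstanding part of Proposition~\ref{prop:toroidal}.

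The hard part is not the strategy but the bookkeeping in the two node counts. Three points will require care: the four corner nodes, which $C$ avoids but which $M_{A,B}$ and $O(R)$ may cover and which sit on the common boundary of the discs, so that their contribution is neither double-counted nor lost; the elements of $O(C)$ distributed along the $A$- and $B$-lines of $C$, whose number has a parity that must be read off from $C$ having exactly one edge of each relevant type; and the step of combining the per-disc congruences so that they yield the \emph{global} statement about $|M_{A,B}\cap A|$ and $|M_{A,B}\cap B|$ rather than a statement about their restriction to part of $\partial R$. I expect the last of these to be the most delicate, and to be exactly where conditions (R1) and (R2) — rather than merely (S1), (L1), (E1), (E2) — are indispensable.
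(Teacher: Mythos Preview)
Your overall approach---cut $R$ along $C$, apply region consistency (R2) to the resulting open discs, and convert the endpoint count into a parity statement about $M_{A,B}$---is exactly the paper's. The difference is in what you try to extract from that count, and there you have a real gap.

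You explicitly set out to pin down \emph{both} parities of $M_{A,B}$, dismissing the hypothesis that $M_{A,B}$ is not of type $ee$ as insufficient. But in the $oe$ case the two discs $D_\ell$ and $D_r$ give the \emph{same} constraint, not two independent ones. On $\partial D_r$ (and likewise on $\partial D_\ell$) every edge of $M_{A,B}\cap A$ meets either $0$ or $2$ boundary nodes, because the portions of $A_t$ and $A_b$ lying on $\partial D_r$ consist of opposite pairs; only the edges of $M_{A,B}\cap B$ meet $\partial D_r$ in exactly one node. So each disc's (R2)-count yields only $|M_{A,B}\cap B|\equiv 0\pmod 2$, and ``adding the two congruences'' gives nothing further---certainly not that $|M_{A,B}\cap A|$ is odd. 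There is no other non-vacuous region condition to invoke here: with a single $A$-line and no $B$-lines, (R1) is empty and (R2) gives exactly these two discs. The hypothesis that $M_{A,B}$ is not $ee$ is therefore not a convenience but the only way to finish: once $|M_{A,B}\cap B|$ is even, non-$ee$ forces type $oe$. The paper does precisely this, using just one disc.

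The $oo$ case is handled in the same spirit. The paper picks the component $D'$ of $R\setminus C$ containing two opposite corners; on $\partial D'$ every edge of $M_{A,B}$ (whether in $A$ or in $B$) meets exactly one boundary node, so the (R2)-count yields that $|M_{A,B}|=|M_{A,B}\cap A|+|M_{A,B}\cap B|$ is even, ruling out $eo$ and $oe$; non-$ee$ then forces $oo$. Again only one parity relation is proved directly. Your plan to ``constrain the parities of $|M_{A,B}\cap A|$ and of $|M_{A,B}\cap B|$ simultaneously'' and force each to be odd would require two independent congruences, which the available discs do not supply.
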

\begin{proof}
First, we consider the case where $C\cap B=\emptyset$.
Because $C$ is not of type $\cee$, 
  it must be that $|C\cap A|= 1$.
Let $L$ be the only $A$-line.
Let $D$ be the open disc bounded by $L$, an interval of $H_b$,
  the vertical side of $V_r$, and an interval of $H_t$. 
By condition (R2), we have that 
  $D$ contains an even number of 
  endpoints of elements of $O(R)\cup O(C)\cup \calP_{A,B}$.
However, only endpoints of $O(R)$ belong to $D$, and therefore $O(R)$
  covers an even number of vertices of the boundary of $D$.
The remaining (even number) of vertices must be covered by $\calP_{A,B}$. 
Since each edge of $\calP_{A,B}\cap A$ covers an even number 
  of vertices of the boundary of $D$, it follows that 
  $|\calP_{A,B}\cap B|$ is even.
Since $\calP_{A,B}$ can not be of type $\mee$, we get that $\calP_{A,B}$ is of type $\moe$,
  thus completing the analysis of the case being considered.
The case that $C\cap A=\emptyset$ is analogous.

To conclude, let $C$ have exactly one edge from both $A$ and $B$. 
We note that $C\cap \calP_{A,B}$ may consist of zero, one or two edges.
We consider in detail the case where $|C\cap \calP_{A,B}| =2$.
The remaining cases can be dealt with similarly.
Let $L$ be the only $A$-line of $C$.
As in condition (R2) for $L$, 
  we concatenate two copies of $R$ by its vertical side.
Let $A'$-line $L'$ be the cover of $L$ in the concatenated rectangle $R'$
  and let $D$ be one of the two open discs obtained
  by subtracting $L'$ from the interior of $R'$.

By condition (R2) the disc $D$ has an even number of endpoints 
  of elements of $O(R')$. 
We also note that $D$ has 
  an even number of vertices in its boundary. 
It follows that $\calP'_{A',B'}$ (see condition (R2)) covers an even 
  number of vertices of $D$'s boundary.
Moreover, it is not hard to see that edges of 
  $\calP'_{A',B'}$ that are incident to exactly one vertex of the boundary of $D$
  can be placed in one-to-one correspondence with the edges 
  of $\calP_{A,B}$ that are incident to at least one 
  vertex in the boundary of $D$.
Hence, the cardinality of $\calP_{A,B}$ must also be even. 
Since $\calP_{A,B}$ is not of type $\mee$, it must be of type $\moo$.
\end{proof}


%

\bibliographystyle{alpha}
\bibliography{biblio}

\end{document}